\newcolumntype{?}{!{\vrule width 2pt}}
\newlength{\Oldarrayrulewidth}
\newcommand{\interior}[1]{%
  {\kern0pt#1}^{\mathrm{o}}%
}
\newtheorem{theorem}{Theorem}
\newtheorem{definition}{Definition}
\newtheorem{lemma}{Lemma}
\newtheorem{proposition}{Proposition}
\pgfplotsset{compat=1.13}
\renewenvironment{proof}[1][\proofname]{\par
  \pushQED{$\blacksquare$}%
  \normalfont \topsep6\p@\@plus6\p@\relax
  \trivlist
  \item[\hskip\labelsep
        \bfseries
    #1\@addpunct{:}]\ignorespaces
}{%
\hfill \popQED\endtrivlist\@endpefalse
}
\renewcommand{\interior}[1]{\text{int}({#1})}
\crefname{subsection}{subsection}{subsections}
\Crefname{subsection}{Subsection}{Subsections}
\begin{document}

\begin{titlepage}
\title{Sorting and Grading\footnote{We thank the editor, Nicola Pavoni, and four anonymous referees for comments and suggestions that greatly helped us improve this paper. We   
 are very grateful to
Arjada Bardhi, Antonio Cabrales, 
Matteo Escud\'{e}, 
B\r{a}rd Harstad, Nenad Kos,  
Eirik  Kristiansen,   Ming Li, Mart\'{i} Mestieri, Eduardo Perez-Richet, Johannes Schneider, Vasiliki Skreta, Takuro Yamashita, and Andy Zapechelnyuk  for   useful  discussions and feedback.}}
\author{Jacopo Bizzotto\thanks{Oslo Business School, OsloMet. Email: \href{mailto:jacopo.bizzotto@oslomet.no}{jacopo.bizzotto@oslomet.no}.} \quad Adrien Vigier\thanks{School of Economics, University of Nottingham. Email: \href{adrien.vigier@nottingham.ac.uk}{adrien.vigier@nottingham.ac.uk.}}}
\date{February 2024}
\maketitle
\begin{abstract}

We 
 propose a  framework  to assess how to optimally sort and grade 
 students of heterogenous ability.
Potential  employers  face  uncertainty regarding an  
individual's productive value. Knowing which school an individual went to is useful for two reasons: firstly, average student ability may  differ across schools; secondly, different schools  may use different grading rules and thus provide varying incentives to exert effort.  An optimal  school system exhibits coarse stratification with respect to ability,  and more lenient grading at the top-tier schools
than at the bottom-tier schools. Our paper  contributes to the ongoing  policy debate on  tracking in secondary schools.
\end{abstract}

\vspace{.2 in}

\noindent \textbf{JEL classification}: D02, D83,   I24, J71 

\noindent \textbf{Keywords}: Tracking, Grading, School System, Statistical Discrimination.

\thispagestyle{empty}
\end{titlepage}

\newpage

\section{Introduction}

%\epigraph{The average level of educational performance of the population is probably the most decisive factor for the long-term growth of an economy}{Woessmann 2009}

Few debates are more heated than  how best to 
 foster students' achievement.\footnote{While the discussion in this paper focuses on secondary school systems, another application of our  analysis   concerns  public universities.}
Two questions occupy a central place in this debate:  (i) should students be sorted based on ability?  (ii) how should students be graded? 
In this paper, we 
study the interplay between sorting and grading, and argue that one reason for sorting students is  that, depending on their ability,   students ought to be graded  differently.

The practice of separating students according to ability, commonly  referred to as tracking,  typically  occurs in high school.\footnote{The age at which  students are first tracked varies across countries. See, e.g.,  \cite{Betts2011Review} or \citet{Woessmann2009Evidence}.}
In England, for example,  the public secondary school system includes a limited number of selective grammar schools together with a vast number of comprehensive (i.e. non-selective) schools.\footnote{For an history of tracking in the United Kingdom, see \citet{WhittyPower2015}.} 
Countries like Austria, Germany and Switzerland divide  students into 
  academic and vocational high schools.
By contrast,  in  the United States and  Canada,
 tracking takes the form of ability grouping within   schools.   Yet
other countries  choose not to track students at all, including France and  Italy. The proponents of tracking 
argue that grouping  students by  ability is socially efficient and 
benefits 
 all  categories  of students;  its  critics claim on the other hand   that only high-ability students benefit from
tracking and that,   
if selection supposedly based on ability  ends up capturing students' socio-economic background, tracking might be a factor which contributes to  perpetuate inequality across generations.  
Empirical evidence on the matter  is mixed: \cite{galindo2007heterogeneous} and
   \citet{Duflo2011Kenya} report  evidence of the positive effects  of tracking; 
others, including 
  \cite{Kerr2013Cognitive} and 
   \cite{matthewes2021better}, 
document a negative impact overall.

The issue of grading 
is no less controversial than that of tracking. While the importance of grading standards for student behavior is generally accepted,  disagreement is common regarding   what these standards ought to  be. 
One  recurring theme in this  debate concerns  the merits and flaws  of standardized tests.\footnote{Standardized tests are  defined as those in which the conditions and content are equal for everyone taking the test, regardless of when, where, or by whom the test was given or graded.} The  proponents   of such  tests emphasize that they are fair and    facilitate 
 cross-student comparisons. 
Critics argue that different students ought to be graded differently. For example, 
\cite{becker1992learning}
argue that stratifying students into groups in which each has a chance for
success may be preferred to the setting of a single national standard.

We 
 propose a simple framework  to assess how to optimally sort and grade 
students.
A population  of students is divided across  the   different schools forming part of a given school system.
Each individual comes out of school with a  productive value that can be either  high or low. The 
distribution of an individual's productive value is determined by  a combination of his ability and effort. However, 
ability, effort,  and productive value are all unobserved
by employers.
Employers only observe the school which the student attended and the grade
which he obtained; they form beliefs about the  student's productive value and   offer a competitive wage.\footnote{In practice,  some individuals  seek employment immediately after secondary school while others go  to university. What we generically refer to as employers could just as well represent  university admissions officers.}
Figure \ref{Figure1ok} summarizes  the model.  

\begin{figure}
\begin{tikzpicture}[x=3.5cm,y=6.5cm,scale=1, house/.style={
        draw, minimum width=12mm, minimum height=10mm, 
        label={[name=labelaux, inner sep=1mm, distance=0pt]#1}, 
        append after command={%
            \pgfextra \draw ([xshift=.5\pgflinewidth]\tikzlastnode.north west)--(labelaux.north)
                --([xshift=-.5\pgflinewidth]\tikzlastnode.north east);\endpgfextra
        }
    },
    vertex/.style={draw,circle,minimum size=0.8cm}
    ,
    grade/.style={draw,rectangle, minimum width=8mm, minimum height=12mm,}]
%headings
  \draw(0,1) node{STUDENTS};
  \draw(1,1) node{SCHOOLS};
  \draw(2,1) node{\begin{tabular}{c}
    PRODUCTIVE\\
    VALUES 
\end{tabular}};
  \draw(3,1) node{GRADES};
  \draw(4,1) node{\begin{tabular}{c}
    EMPLOYERS'\\
    BELIEFS 
\end{tabular}};

%Students
\node[person,shirt=black,scale=1](A)  at (0,.9)  %{$\theta_1$}; 
{};
\node[person,shirt=blue,scale=1](B)  at (0,.6)  %{$\theta_2$}; 
{};
\node[person,shirt=red,scale=1](C)  at (0,.3)  %{$\theta_3$}; 
{};
\node[person,shirt=yellow,scale=1](D)  at (0,0) %{$\theta_4$}; 
{};
%Schools
\node[house](S2) at (1,.1) {};%{$s_2$};
\node[house](S1) at (1,.7) {};%{$s_1$};

%Productive values
\node[vertex](pv1) at(2,.8){1};
\node[vertex](pv2) at(2,.6){0};
\node[vertex](pv3) at(2,.2){1};
\node[vertex](pv4) at(2,.0){0};

%Grades
\node[grade](g1) at(3,.8){pass};
\node[grade](g2) at(3,.6){\hspace{.8mm}fail\hspace{.8mm}};
\node[grade](g3) at(3,.2){pass};
\node[grade](g4) at(3,.0){\hspace{.8mm}fail\hspace{.8mm}};

%Beliefs
\draw[|-|,color=black] (4,0) -- (4,.1);
\draw[-|,color=black] (4,0.1) -- (4,0.3);
\draw[-|,color=black] (4,0.3) -- (4,0.5);
\draw[-|,color=black] (4,0.5) -- (4,0.6);
\draw[-|,color=black] (4,0.6) -- (4,0.85);
\node at(3.95,0){0};
\node at(3.95,.85){1};

\node(B4) at(4,0.1){};
\node(B3) at(4,.3){};
\node(B2) at(4,.5){};
\node(B1) at(4,.6){};

%\node at(4.1,0.1){$q^-_{s_2}$};
%\node at(4.1,.3){$q^-_{s_1}$};
%\node at(4.1,.5){$q^+_{s_2}$};
%\node at(4.1,.6){$q^+_{s_1}$};

%arrows
\path[->, auto] (A) edge node {} (S1)
(B) edge node {} (S1)
(D) edge node {} (S2);
\path[->] (C) edge node {} (S2);
\path[->, dashed] (S1) edge node {} (pv1)
edge node {} (pv2)
 (S2) edge node {} (pv3)
 edge node {} (pv4);
\path[->, dashed] (pv1) edge node {} (g1)
  edge node {} (g2)
(pv2) edge node {} (g1)
  edge node {} (g2)
(pv3) edge node {} (g3)
  edge node {} (g4)
  (pv4) edge node {} (g3)
  edge node {} (g4);
  \path[->, dashed] (pv1) edge node { } (g1)
  edge node {} (g2)
(pv2) edge node { } (g1)
  edge node {} (g2)
(pv3) edge node {} (g3)
  edge node {} (g4)
  (pv4) edge node {} (g3)
  edge node {} (g4);
 \path[->, auto] (g1) edge node { } (B1)
(g2) edge node { } (B3)
(g3) edge node {} (B2)
  (g4) edge node {} (B4);
  
\end{tikzpicture}\caption{The Setting}\label{Figure1ok}
\end{figure}

The assignment of students to schools 
matters for two reasons:
(a) the school that a student went to conveys information 
regarding   his productive value; (b)  different schools may use different grading rules.
We study the  problem of a planner  designing  the   school system so as to  
maximize social welfare. Since   due to moral hazard each student's equilibrium 
  effort is below the socially optimal level,  the problem of the planner in essence amounts to maximizing total effort.

The basic  question we address  is the following: should students be sorted  according to ability and,  if so, how?  
The  trade-off is as follows:
\begin{itemize}
    \item 
On the one  hand, 
 sorting  students enables tailored incentives. This is valuable, as the optimal grading rule for any given school depends on the average ability of the students assigned to that school.  The basic idea 
 is simple.
 For a student who is regarded as ``bad", the value of obtaining a  good grade is very high 
 as long as this grade  
is reliable 
  (i.e., as long as the grading rule   generates few false positives).
 By contrast,  in the case of students who are regarded as ``good", the logic is reversed.
In consequence, a one-size-fits-all approach to grading might end up depressing
all students' incentives to exert effort in school.
\item 
On the other hand, two schools with  different average student ability and the same grading rule should be merged. The logic is as follows.
Mixing
students increases the uncertainty faced by employers, which,  in turn, augments the
weight of school grades in employers' inferences. Yet, 
 when grades matter more, top marks
become worth the  effort. Thus,  mixing students indirectly favors their effort.

\end{itemize}

At an optimum, 
the aforementioned trade-off is resolved as follows:
(i)  students whose ability is above a cutoff are separated from those whose ability is below;
  (ii) on each side of the separating  cutoff, 
    all students 
    are   pooled.
 Bottom-tier students are evaluated through a tough grading rule (i.e., one generating
few false positives), while top-tier students are evaluated through a lenient grading rule
(i.e., one generating few false negatives).

To convey the paper's main  insights in the simplest possible way and ensure  the tractability of the model,  
the framework  we propose makes several assumptions of note.
Firstly, we ignore the fact that some schools may be intrinsically  better than others, perhaps because they have  more resources, better teachers, or a combination of both.
Secondly, we ignore 
 ``direct" peer effects. 
 For example, if  students learn from their peers,    one's   educational attainment  
 should  be positively affected by an increase in  the mean student ability  of the  school to  which he was  assigned. Alternatively, students may  have homophilous preferences, and learn best when grouped with  students of similar ability as their own.\footnote{It is well   known
that, in the presence of
direct
   peer  effects, the
optimal allocation of students  depends in a
complex way on the curvature properties of the educational production
function. See, e.g.,  \cite{arnott1987peer}. }
 Thirdly, we assume that an individual's productive value is   binary.\footnote{An example with three productive values is presented  in Section  \ref{sec:morethantwo}. } 
  Enriching the model along any  of the aforementioned  dimensions   
  would evidently improve the realism of our framework.

Last but not least, 
one critical    premise of our model,  is that the  planner can   assign  students to schools  on the basis of their ability. Austria, England,   Germany  and Switzerland 
provide examples of this kind. However,  in countries like  
France and Italy, students are typically free to  choose their secondary  school.\footnote{In France and Italy, this choice is somewhat constrained by a student's place of residence.} 
More generally, 
 social planners may either  lack information regarding  students' ability,
or be unable to use that information (e.g., for  political reasons).  We therefore examine  a variant of the model in which  students can choose the school they go to. In this variant of the model,  a school system  needs to be incentive compatible for all students in order to be implementable.
We show first that without   transfers (e.g., differentiated school fees or scholarships) the planner can do no better than randomly allocating  students to schools.
We then show that, with transfers, any constrained-optimal school system has the same qualitative features as when there  are no  incentive compatibility constraints: students whose ability is above a cutoff are separated from those whose ability is below, and,  on each side of the separating  cutoff, 
    all students 
    are   pooled.
However, the separating cutoff is larger.
In other words, the incentive compatibility constraints reduce the size of the top tier while increasing  that of the bottom tier.

The rest of the paper is organized as follows.  We next discuss the related literature. The model is presented in 
Section
\ref{sectionmodel}. Section \ref{section:eqeffort}
 contains the preliminary analysis.
Our main result is presented in Section \ref{sec:mainresult}. Section 
\ref{sec:IC} studies incentive compatibility. 
Section
\ref{sec:morethantwo} examines 
an example with 
three productive values, while  Section 
\ref{sec:conc} concludes.

\paragraph{Related Literature.}

We bridge two strands of literature: the literature on tracking, and that on grading.
The (mostly empirical)    literature on tracking
goes at least as far back as 
\cite{cook1924study} and \cite{whipple1936grouping}.
The arguments  in favor of tracking are typically twofold (see, e.g., \cite{slavin1990achievement} or \cite{Betts2011Review}). Firstly,  more homogeneous groups of students allow teachers to tailor their pedagogical strategies.
Secondly, with tracking, a 
students' performance  is only compared to that of similar-ability peers, preventing a possible lowering of  self-esteem that could result from comparisons with the work of higher ability students,  while allowing  highly gifted students to  be appropriately challenged and to view their own abilities more realistically.  
We
contribute to the aforementioned literature by
arguing  that another natural  reason  for which tracking could  be efficient is that tracking allows incentives to  be tailored to students' characteristics, via targeted  grading policies.\footnote{A different  argument  in support of tracking is as follows.  Suppose firms are heterogenous and the efficient matching between firms and  students is  assortative.
Then tracking could help achieve the  efficient matching. 
We view all of these arguments as complementary.}

The literature on grading can be divided in two strands.
A first strand   studies the link between   grading standards and    effort. 
\citet{becker1992learning} and 
  \citet{Betts98} argue   
that a rise in 
grading standards  has ambiguous effects on overall  achievement. While more able students  increase effort
in order to reach the new standard,   less able students, who must increase effort by a greater amount to
increase their achievement, may give up instead. 
\cite{costrell1994simple}  concludes that   social 
planners with 
egalitarian concerns 
should thus  lower grading standards, rather  than raise them.   \cite{himmler2013double}  further argue that 
 schools with  disadvantaged students ought to  lower standards more than others so as to compensate  a lack of incentives to graduate.\footnote{\citet{DubeyGeanakoplos10} and \citet{Zubrickas15} take a different approach and  
show that the use of coarse grades is optimal 
 when  students
care primarily about  their relative rank in class.}
The second  strand of the literature on grading 
views grade not primarily as a means to incentivize effort, but as a strategic tool for  schools to maximize  their 
 students' payoffs on the  job market.  In this spirit,
 \cite{chan2007signaling}, \cite{ehlers2016honest}, and   \cite{tampieri2019students} 
 develop  signaling theories  of grade inflation;
 \cite{OstrovskySchwarz10} explore information disclosure in matching markets
 and argue  that  schools may suppress some information about students in order
to improve their average job placement; 
\cite{BoleslavskyCotton15} study  information disclosure in  contests and show that, in  equilibrium,  schools  adopt grading policies
that do not perfectly reveal ability.
  We contribute to the literature on grading by exploring  the interplay between grading design and the allocation of students to schools.

Our paper is intimately connected to 
  the  vast   literature studying  statistical discrimination in labor markets, understood as employers' reliance 
on  payoff-irrelevant characteristics 
for making inferences  about workers' productivity.\footnote{For a detailed review of the   literature on the subject, see e.g. \cite{onuchic22}. }
 In the Arrowian tradition (\cite{arrow1973}), we study an environment in which  productivity is endogenous.
 As in \cite{coate1993will}, 
 because in our model employers' prior beliefs  affect their interpretation of the  productivity signal (in our case, the grade), 
employers' conjectures affect  
 incentives to invest in  skill acquisition. 
  We depart from the literature
following \cite{arrow1973}
 in two crucial ways. Firstly,  statistical discrimination does not proceed from equilibrium   multiplicity in our model: each school induces  a unique equilibrium level of effort; said  effort, however,   varies  with the way students are graded. Secondly,
whereas the  aforementioned literature takes 
the payoff-irrelevant characteristic for making inferences  about a  worker's productivity 
as given 
(e.g., race or gender),
in our model said  characteristic (namely, the school which a student  attended) is either   a choice made by  the student,  or part of the decision maker's  design problem.   
 In this sense, our model  is closer to Spence signalling (\cite{spence1973press}). However,  the connection with signalling stops here:  our model involves both  adverse selection and moral hazard, and employers do not  care  exclusively  about students' types, but   about  a combination of type and effort (productive values).

Our work is  also  connected to   
the   literature  on mechanism design
 with peer effects, comprising
\citet{jehiel1996not,jehiel1999multidimensional}, \cite{jehiel2001efficient},
\cite{board2009monopolistic}, \citet{rayo2013monopolistic},
\citet{rothschild2013redistributive,rothschild2016optimal}, and \cite{yamashita2021large}. Relative to this literature, 
the main novelty of our work
is that how much an individual values being part of  a certain  group is not fixed, but instead depends on  a   grading rule which policy makers can adjust.\footnote{The problem we analyze can  be viewed as an information design problem with constraints. Effectively,
 the school system plays the role of an information  structure for employers.}
Finally, the problem we analyze in Section \ref{sec:IC} connects our work to 
a number of recent papers that share with ours the feature  that the planner/designer  offers  a menu of items comprising a statistical experiment of some kind. This literature includes
 \citet{esHo2007optimal},  \cite{kolotilin2017persuasion}, \cite{li2017discriminatory},  \citet{bergemann2018design},  \cite{guo2019interval},  \cite{wei2020reverse}, \cite{yamashita2021type},  and  \cite{yang2022selling}. 
In all those papers, 
the experiment  
helps   the agent  solve a decision problem.  Instead,
in our setting 
the experiment (i.e., the grading rule)   helps potential employers evaluate the productive value of the  agent (the student).

%Within this strand of literature,  the findings of  \citet{popov2013university} resonate  with  our results. In their  model, bottom-tier schools  compete for better job assignments by raising the average ability of students who receive ``A" grades, thus setting excessively high grading standards.  We show that, in such  schools,  adopting  a tough grading rule can have   the additional advantage of incentivizing effort.  

\section{Model}\label{sectionmodel}

The broad features of our model are as follows.
A 
 population of students is assigned to different   schools.  At the end of school, a   student's value to an employer  can  be either high or low.
A student's chances of becoming high value 
are increasing  in 
ability and effort.
However, neither   ability nor effort are observed by employers, who  form beliefs about a student's productive value based  solely on the school which this student   attended and the grade which he obtained. Every  student then receives a wage which reflects employers' perception about his productive value.  The 
 details of the model are presented below.

 There are $m \in \mathbb{N}$ students, labelled by $j \in \{1,\cdots, m\}$. The assignment of students to schools is
 captured by a pair $(n, \phi)$, where 
 $n \in \mathbb{N}$ 
and  
  $\phi$
 is a surjective mapping from  
  $\{1, \cdots, m\}$ to  $\{1, \cdots, n\}$. We refer to such  a pair $(n, \phi)$ as an \textit{assignment}; under this assignment,  student  $j \in \{1,\cdots, m\}$ is assigned to school  $\phi(j)$.

Every  student  is  characterized by his  \textit{type}, which can be thought of as capturing   his ability. The type of student $j$ is denoted by 
 $ \theta_j \in(0, \alpha)$, where $\alpha \in (0,1)$. 
At the end of school, 
each student's  \textit{productive value} 
is either 0 or 1.
 The probability 
 that
 student $j$'s productive value is $1$
  equals   the sum of  this student's type $\theta_j$ 
and \textit{effort} $e_j \in [0, 1-\alpha]$.
  The cost of exerting  effort $e_j$ is given by 
$c(e_j)$, where  $c'>0$, $c''>0$,  $c'''\geq 0$,
$c'(0)=0$, 
and  $c'(1-\alpha)>1$.

Letting $b$ denote a parameter in $(0,1)$, we refer to 
a pair $G=(g^0, g^1) \in [0,1]^2$ such that 
$ g^1> g^0$ and  $g^1 -g^0\leq b$  as a
\textit{grading rule}.
The interpretation is as follows.
Every 
school delivers a grade to each of its students,  either
  ``pass" or ``fail".
A school's grading rule determines the probabilities of obtaining the grade pass conditional  on one's productive value: under  the grading rule  $G$, a   student with productive value 0  passes with probability $g^0$, while  a student with productive value 1 passes with probability $g^1$.   
The condition
$ g^1> g^0$  ensures that obtaining the  grade pass conveys more favorable information about one's  productive value than obtaining the grade fail;
the
condition  $g^1 -g^0\leq b$ 
 rules out perfectly informative grading rules. We 
denote by  $\mathbb{G}$  the  set of  all  grading rules.

We refer to 
a  tuple $\big(n, \phi, 
(G_{s})_{s=1}^n \big)$ composed of an assignment  $(n, \phi)$ and a grading rule $G_s$ for every school  $s \in \{1, \cdots, n\}$  as  a \textit{school 
system}.
Given a school system $\big(n, \phi, 
(G_{s})_{s=1}^n \big)$,  
the number and average type of students attending school   $s  \in \{1, \cdots, n\}$ will be denoted by $m_{s}$ and $\Bar{\theta}_{s}$, respectively.
We assume that, for every  $s \in \{1, \cdots, n\}$,
both $\Bar{\theta}_{s}$ and $G_s$
   are   public information. However, a
student's  type, effort, and productive value 
are all unobserved by employers.
Employers
 only observe
 the school which a student attended, and 
 the grade which he obtained. They  form conjectures about effort, and ultimately pay each 
  student $j$  a competitive wage equal to their (Bayesian)  posterior  belief  that this student's productive value is  equal to 1.

\section{Equilibrium Effort}\label{section:eqeffort}

This section examines the effort exerted by different students in   an arbitrary school system.
We show first 
 that each school system induces a unique  effort profile. We then 
show that 
the grading rule 
maximizing effort in a school depends on  its average  student type. Finally,
we show that
if two  schools share the same grading rule then pooling their students in a single school   increases  total  effort.

We start by defining the notion of equilibrium effort.

\begin{definition}
Given a  school  system $\big(n, \phi, 
(G_{s})_{s=1}^n \big)$, 
  $e$
 is an  \emph{equilibrium effort}  in  school $s \in \{1, \cdots, n\}$ if  choosing 
effort $e$
maximizes  the expected payoffs of  the    students  attending school  $s$ given  employers' conjecture that students of $s$ exert effort $e$. 
\end{definition}

Consider a school    $s$
 forming part of a given school system. 
Let $q_s^+(\hat{e}_s)$ 
denote employers' posterior belief that a student of $s$ who obtained the grade pass has productive value 1, given 
employers' conjecture that students of $s$ exert effort $\hat{e}_s$.
Let   $q_s^-(\hat{e}_s)$ denote the corresponding  belief for   students of $s$ who obtained the grade fail.
Then
\[ q_s^+(\hat{e}_s)
= \frac{ (\Bar{\theta}_s + \hat{e}_s) g_s^1}{ (\Bar{\theta}_s + \hat{e}_s) g_s^1 +  (1-\Bar{\theta}_s - \hat{e}_s) g_s^0}, \quad 
q_s^-(\hat{e}_s)= \frac{ (\Bar{\theta}_s + \hat{e}_s) (1-g_s^1)}{ (\Bar{\theta}_s + \hat{e}_s)  (1-g_s^1) +  (1-\Bar{\theta}_s - \hat{e}_s)  (1-g_s^0)}.
\]

A type-$\theta$
student 
attending school $s$ and exerting effort $e $ obtains the grade 
pass with probability $(\theta+e)g_s^1+ (1-\theta-e) g_s^0$, and the grade fail otherwise.
So letting $U_s(\theta, e, \hat{e}_s)$
denote the expected payoff of a
 type-$\theta$
student 
attending school $s$ and exerting effort $e$ 
(given 
employers' conjecture that students of $s$ exert effort $\hat{e}_s$):
\[ U_s(\theta, e, \hat{e}_s)=
[(\theta+e)g_s^1+ (1-\theta-e) g_s^0] q_s^+(\hat{e}_s) + 
[1-(\theta+e)g_s^1- (1-\theta-e) g_s^0]  q_s^-(\hat{e}_s)  -
 c(e).\]
The unique   $e$
maximizing 
$ U_s(\theta, e, \hat{e}_s)$
satisfies 
\[ c'(e)=
(g_s^1 -  g_s^0) \big(q_s^+(\hat{e}_s)  - q_s^-(\hat{e}_s)\big) = B(G_s, \Bar{\theta}_s+ \hat{e}_s ), 
\]
where 
\[
B(G,x):= (g^1 -  g^0) \bigg( \frac{x g^1  }{ g^0 + x (g^1 -  g^0) }   - \frac{x(1-g^1)}{ 1- g^0  -x (g^1 -  g^0) } \bigg), \quad  \forall x\in (0,1).
\]
 It follows that 
$e$ is an equilibrium effort in school    $s$ if and only if
\begin{equation}\label{eq:focbase}
    c'(e)=B(G_s, \Bar{\theta}_s+ e ).
\end{equation}
The function $e \mapsto B(G_s, \Bar{\theta}_s+ e )- c'(e)$ is 
 strictly concave, positive  at $e=0$, and negative  at $e=1-\alpha$. So equation \eqref{eq:focbase} has a unique solution,
 and this solution belongs to  
   $(0,1- \alpha)$; as it  
 is determined by $\bar{\theta}_s$ and $G_s$, we will denote it by $\xi(\Bar{\theta}_s, G_s)$.  The next lemma summarizes the main takeaway.

\begin{lemma} To  each   school
 $s$ 
forming part of a given school system 
is associated  a unique equilibrium effort  $\xi(\Bar{\theta}_s, G_s)$. This equilibrium effort is the unique solution of equation  \eqref{eq:focbase}.
\end{lemma}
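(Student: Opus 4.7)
The plan is to show that the lemma reduces to establishing that the scalar equation \eqref{eq:focbase} has a unique solution in the admissible range, since the derivation preceding the statement already shows that, for any conjectured $\hat{e}_s$, every student of school $s$ best-responds with a common effort determined by $c'(e)=B(G_s,\bar{\theta}_s+\hat{e}_s)$. Strict convexity of $c$, together with $c'(0)=0<B(G_s,\cdot)$ and $c'(1-\alpha)>1\geq B(G_s,\cdot)$, already guarantees that this best response is unique and interior; hence the remaining task is purely about the fixed-point equation obtained by setting $\hat{e}_s=e$.

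I would define $\Phi(e):=B(G_s,\bar{\theta}_s+e)-c'(e)$ on $[0,1-\alpha]$ and verify three properties: (i) $\Phi$ is continuous, (ii) $\Phi(0)>0$ and $\Phi(1-\alpha)<0$, and (iii) $\Phi$ is strictly concave. From (i) and (ii), existence of a zero follows by the intermediate value theorem; strict concavity then upgrades existence to uniqueness, because a strictly concave continuous function on an interval can cross zero at most once in the direction it descends after its maximum. The boundary signs in (ii) are immediate: at $e=0$ one has $c'(0)=0$ while $B(G_s,\bar{\theta}_s)>0$ since $g_s^1>g_s^0$ forces $q_s^+>q_s^-$; at $e=1-\alpha$ one has $c'(1-\alpha)>1$ while $B(G_s,\bar{\theta}_s+e)\leq g_s^1-g_s^0\leq b<1$.

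The main obstacle, and the only step requiring real work, is (iii). I would decompose $B(G_s,x)$ into the two rational terms $\frac{x g_s^1}{g_s^0 + x(g_s^1-g_s^0)}$ and $-\frac{x(1-g_s^1)}{1-g_s^0 - x(g_s^1-g_s^0)}$. A direct computation of the first derivative of each in $x$ yields expressions of the form $\text{constant}/(\text{linear in }x)^2$ with signs reflecting monotonicity; a second differentiation produces a term of the form $\text{constant}/(\text{linear in }x)^3$ that is negative for both pieces (since the relevant denominators stay positive on the admissible range). Summing and multiplying by $g_s^1-g_s^0>0$ shows that $x\mapsto B(G_s,x)$ is strictly concave on the relevant domain. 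Composing with the affine map $e\mapsto\bar{\theta}_s+e$ preserves strict concavity; adding $-c'(e)$, which is concave by virtue of $c'''\geq 0$, keeps the sum strictly concave, establishing (iii).

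Once the three properties hold, $\Phi$ admits a unique zero in $(0,1-\alpha)$, which depends only on $\bar{\theta}_s$ and $G_s$ and is therefore denoted $\xi(\bar{\theta}_s,G_s)$. Because all students of school $s$ share the same first-order condition regardless of individual type, this common effort is by definition the equilibrium effort of school $s$, completing the proof.
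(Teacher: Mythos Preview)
Your proposal is correct and follows exactly the approach the paper uses: the paper's argument (given in the text immediately preceding the lemma rather than in a separate proof environment) is precisely that $e\mapsto B(G_s,\bar{\theta}_s+e)-c'(e)$ is strictly concave, positive at $e=0$, and negative at $e=1-\alpha$, hence has a unique zero. You simply supply the details the paper leaves implicit (the bound $B\leq g_s^1-g_s^0\leq b<1$, the second-derivative computation for strict concavity of $x\mapsto B(G_s,x)$, and the concavity of $-c'$ via $c'''\geq 0$).
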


Henceforth, 
let
$G_{\star}$
denote the grading rule
$(1-b,1)$, and $G^{\star}$ the grading rule $(0, b)$.
The grading rule 
 $G_{\star}$  thus passes every  student with productive value $1$; 
 the grading rule  
$G^{\star}$, on the other hand, fails  every  student 
with productive value $0$.
Moreover, note that   $G_{\star}$ and $G^{\star}$ both belong to the set of 
  maximally ``informative" grading rules within $\mathbb{G}$.\footnote{Consider two grading rules,  $G_A$ and $G_B$, such that either $g^1_A= g^1_B$ or  $g^0_A= g^0_B$. Then $G_A$ is Blackwell-more-informative than $G_B$ if and only if  
$g^1_A- g^0_A> g^1_B- g^0_B$. }
We can now state our first proposition.

\begin{proposition}\label{prop1}
For any   $\Bar{\theta} \in (0, \alpha)$ and any  grading rule  $G \in \mathbb{G} \setminus \{G_{\star}, G^{\star}\}$, 
either $\xi(\Bar{\theta}, G_{\star})>
\xi(\Bar{\theta}, G)$ or  $\xi(\Bar{\theta}, G^{\star})>
\xi(\Bar{\theta}, G)$.
\end{proposition}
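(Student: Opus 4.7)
The plan is to reduce the proposition to a pointwise comparison of $B(\cdot,x)$ across grading rules, evaluated at the specific value $x = \bar{\theta} + \xi(\bar{\theta}, G)$, and then invoke the single-crossing structure established just before the lemma. The key enabling step is an algebraic simplification of $B$.

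First I would put $q_s^+(x) - q_s^-(x)$ over a common denominator; a routine expansion yields
\[ q_s^+(x) - q_s^-(x) \;=\; \frac{x(1-x)\Delta}{P(1-P)}, \]
where $\Delta := g^1 - g^0$ and $P := (1-x)g^0 + xg^1 = g^0 + x\Delta$ is the total pass probability, so
\[ B(G, x) \;=\; \frac{x(1-x)\Delta^2}{P(1-P)}. \]

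Next, for $x$ and $\Delta$ fixed, maximizing $B(G, x)$ over $g^0 \in [0, 1-\Delta]$ amounts to minimizing $P(1-P)$ over $P \in [x\Delta,\; 1-(1-x)\Delta]$. Since $P(1-P)$ is strictly concave with peak at $P = 1/2$, its minimum on this interval is attained at an endpoint. Computing the difference of the corresponding endpoint values of $B$,
\[ \frac{(1-x)\Delta}{1-x\Delta} - \frac{x\Delta}{1-(1-x)\Delta} \;=\; \frac{(1-2x)(1-\Delta)\Delta}{(1-x\Delta)\,(1-(1-x)\Delta)}, \]
and using $\Delta < 1$, one sees that the left endpoint (i.e.\ $g^0 = 0$) strictly dominates when $x < 1/2$, the right endpoint (i.e.\ $g^1 = 1$) strictly dominates when $x > 1/2$, and the two coincide when $x = 1/2$. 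Both endpoint values are strictly increasing in $\Delta$ (by direct differentiation), so the overall maximum over $\mathbb{G}$ is attained by pushing $\Delta$ to $b$: at $G^\star$ when $x \leq 1/2$ and at $G_\star$ when $x \geq 1/2$. Consequently, for any $G \in \mathbb{G} \setminus \{G_\star, G^\star\}$ and any $x \in (0,1)$,
\[ \max\{B(G^\star, x),\, B(G_\star, x)\} \;>\; B(G, x). \]

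Finally, I would apply this at $x = \bar{\theta} + e_G$ with $e_G := \xi(\bar{\theta}, G)$ (so $x \in (0,1)$), pick $G' \in \{G^\star, G_\star\}$ realizing the strict inequality, and note that the function $f(e) := B(G', \bar{\theta} + e) - c'(e)$ is thus strictly positive at $e = e_G$. But $f$ is strictly concave on $[0, 1-\alpha]$ with $f(0) > 0$ and $f(1-\alpha) < 0$ (as recalled just before the lemma), so its unique zero $\xi(\bar{\theta}, G')$ must lie strictly to the right of $e_G$, delivering $\xi(\bar{\theta}, G') > \xi(\bar{\theta}, G)$. The main obstacle is the endpoint analysis in the middle paragraph: correctly identifying which extremal $g^0$ applies for each $x$, and then ruling in $\Delta = b$ via monotonicity. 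Everything afterwards simply reuses the concavity machinery already used to establish uniqueness of equilibrium effort.
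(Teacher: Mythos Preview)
Your proof is correct and follows the same overall architecture as the paper's: establish the pointwise inequality $\max\{B(G^\star,x),\,B(G_\star,x)\} > B(G,x)$ for every $x\in(0,1)$, then use strict concavity of $e \mapsto B(G',\bar\theta+e)-c'(e)$ together with $f(0)>0$ to conclude that its unique zero $\xi(\bar\theta,G')$ lies strictly to the right of $\xi(\bar\theta,G)$. The only difference is in how the pointwise bound is verified---the paper argues via monotonicity of $B$ in $g^0$ and $g^1$ separately (forcing $\Delta=b$) followed by strict convexity of $g^0\mapsto B((g^0,g^0+b),x)$, whereas you derive the closed form $B(G,x)=x(1-x)\Delta^2/[P(1-P)]$ and optimize directly over $(P,\Delta)$; this is a minor computational variant rather than a genuinely different route.
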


\begin{proof} Let  $\Bar{\theta} \in (0, \alpha)$. 
For any $x \in (0,1)$,
the function   $g^1 \mapsto  B\big( (g^0, g^1),x \big)$
is increasing, and 
the function   $g^0 \mapsto  B\big( (g^0, g^1),x \big)$
decreasing. So either $g^1- g^0=b$ or we can find $\Tilde{G} \in \mathbb{G}$ such that  
$B(\Tilde{G}, x)> B(G, x)$.
In particular, either 
 $g^1- g^0=b$ or we can find $\Tilde{G} \in \mathbb{G}$ such that  
\begin{equation}\label{eqcupcaf2}
B\big(\Tilde{G}, \Bar{\theta} +\xi(\Bar{\theta}, G) \big)>  B\big(G, \Bar{\theta} +\xi(\Bar{\theta}, G) \big)=c'\big( \xi(\Bar{\theta}, G)\big).
\end{equation}
The function $e \mapsto B(\Tilde{G}, \Bar{\theta}+ e )- c'(e)$ is 
 strictly concave, positive  at $e=0$, and zero 
at   $e= \xi(\Bar{\theta}, \Tilde{G})$. So  
\eqref{eqcupcaf2} implies $ \xi(\Bar{\theta}, G)<\xi(\Bar{\theta}, \Tilde{G})$.
In sum, either 
 $g^1- g^0=b$ or we can find $\Tilde{G} \in \mathbb{G}$ such that  
 $ \xi(\Bar{\theta}, \Tilde{G}) > \xi(\Bar{\theta}, G)$.

Next, 
for any $x \in (0,1)$,
the 
function   $g^0 \mapsto  B\big( (g^0, g^0+b),x \big)$
is strictly convex. So if 
 $G= (g^0, g^0+b)\in \mathbb{G} \setminus \{G_{\star}, G^{\star}\}$, we can find $\Tilde{G} \in  \{G_{\star}, G^{\star}\}$ such that 
$B(\Tilde{G}, x)> B(G, x)$.  In particular, 
 if 
 $G= (g^0, g^0+b)\in \mathbb{G} \setminus \{G_{\star}, G^{\star}\}$, we can find $\Tilde{G} \in  \{G_{\star}, G^{\star}\}$ such that \eqref{eqcupcaf2}  holds.
Then reasoning as we did above yields
 $ \xi(\Bar{\theta}, \Tilde{G}) > \xi(\Bar{\theta}, G)$.
\end{proof}

By 
Proposition \ref{prop1}, if a grading 
rule 
$G$ maximizes  effort in a school,  either $G=G_{\star}$ or $G= G^{\star}$.  The logic is as follows.
Any   grading rule $G \notin \{G^{\star}, G_{\star}\}$
can be written as  a strict  convex combination of $G^{\star}$, $G_{\star}$,  and some  uninformative grading rule,  $G^{\varnothing}$ say.\footnote{We here slightly abuse the terminology, since, the way we have defined them, every grading rule is informative. } 
So using the grading rule $G$ is equivalent  to randomizing between  
$G^{\star}$, $G_{\star}$, and  $G^{\varnothing}$. Yet, randomizing the grading rule is conducive to moral hazard. So there exists a grading rule in  the set 
$\{G^{\star}, G_{\star}, G^{\varnothing}\}$ such that adopting this grading rule 
for sure  induces more effort than $G$.\footnote{And,  $G^{\varnothing}$  being uninformative,  either 
$G^{\star}$ or $G_{\star}$  must induce more effort than $G$.}

The next proposition tells us when 
$G^{\star}$ induces more effort than $G_{\star}$, 
 or the other way around.

\begin{proposition}\label{prop2}
    There exists $\Bar{\theta}^\dagger$ such that  $\xi(\Bar{\theta}, G^{\star})> \xi(\Bar{\theta}, G_{\star})$ for $\Bar{\theta}<\Bar{\theta}^\dagger$,  and 
$\xi(\Bar{\theta}, G_{\star})> \xi(\Bar{\theta}, G^{\star})$ for $\Bar{\theta}>\Bar{\theta}^\dagger$.
\end{proposition}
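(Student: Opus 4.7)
The plan is to show that $\Delta(\bar\theta) := \xi(\bar\theta, G_\star) - \xi(\bar\theta, G^\star)$ is continuous and strictly increasing in $\bar\theta$, and then to locate its unique zero explicitly. First, direct computation using the formula for $B$ yields $B(G_\star, x) = bx/(1-b+bx)$ and $B(G^\star, x) = b(1-x)/(1-bx)$. The former is strictly increasing in $x$, the latter strictly decreasing, and the two coincide precisely at $x=1/2$, with common value $b/(2-b)$.

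Second, I would establish the monotone comparative statics of equilibrium effort in $\bar\theta$ separately for the two grading rules. Implicit differentiation of $c'(\xi)=B(G,\bar\theta+\xi)$ gives $d\xi/d\bar\theta = B_x/(c''(\xi)-B_x)$, and the denominator is strictly positive: the map $e\mapsto B(G,\bar\theta+e)-c'(e)$ is strictly concave, positive at $e=0$, and zero at $e=\xi(\bar\theta,G)$, so its derivative at $\xi$---namely $B_x-c''(\xi)$---must be strictly negative. Hence the sign of $d\xi/d\bar\theta$ equals the sign of $B_x$. Combined with the sign information from step one, this yields $\xi(\cdot,G_\star)$ strictly increasing and $\xi(\cdot,G^\star)$ strictly decreasing in $\bar\theta$. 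Thus $\Delta$ is continuous and strictly increasing.

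Finally, let $\Phi^\dagger$ denote the unique solution in $(0,1-\alpha)$ to $c'(\Phi^\dagger)=b/(2-b)$; this exists and is unique because $c'$ is continuous and strictly increasing from $c'(0)=0$ to $c'(1-\alpha)>1>b/(2-b)$. Set $\bar\theta^\dagger := 1/2 - \Phi^\dagger$. Then $\bar\theta^\dagger+\Phi^\dagger = 1/2$, so both equilibrium equations at $\bar\theta^\dagger$ with candidate effort $\Phi^\dagger$ reduce to the same identity $c'(\Phi^\dagger)=B(G_\star,1/2)=B(G^\star,1/2)=b/(2-b)$. By uniqueness of the equilibrium effort, $\xi(\bar\theta^\dagger, G_\star)=\xi(\bar\theta^\dagger, G^\star)=\Phi^\dagger$ and hence $\Delta(\bar\theta^\dagger)=0$. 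Strict monotonicity of $\Delta$ then delivers $\xi(\bar\theta,G^\star)>\xi(\bar\theta,G_\star)$ for $\bar\theta<\bar\theta^\dagger$ and the reverse inequality for $\bar\theta>\bar\theta^\dagger$, which is the claim. The main technical point is the justification that $c''(\xi)-B_x>0$ at the equilibrium; this follows from the single-crossing-from-above structure of the strictly concave FOC map already established in the derivation of $\xi$.
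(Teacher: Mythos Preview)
Your argument is correct and rests on the same core idea as the paper's: establish that $\xi(\cdot,G_\star)$ is strictly increasing and $\xi(\cdot,G^\star)$ strictly decreasing in $\bar\theta$, so that their difference is strictly monotone and admits at most one zero. The paper obtains this monotonicity by a direct comparison (evaluate the FOC map $e\mapsto B(G,\bar\theta+e)-c'(e)$ at a shifted point and use its strict concavity together with positivity at $0$), while you use implicit differentiation; both routes ultimately hinge on the same fact, namely that the slope of the FOC map at the equilibrium effort is strictly negative, which you justify correctly.

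You go further than the paper by computing the two marginal-benefit functions in closed form and pinning down the cutoff explicitly as $\bar\theta^\dagger = \tfrac{1}{2}-\Phi^\dagger$ with $c'(\Phi^\dagger)=b/(2-b)$; the paper simply stops after establishing opposite monotonicity. One small caveat worth flagging: nothing in the model forces $\bar\theta^\dagger$ to lie in $(0,\alpha)$, so evaluating $\xi(\bar\theta^\dagger,\cdot)$ as you do may require extending $\xi$ beyond the paper's domain. This is harmless for the proposition as stated---strict monotonicity of $\Delta$ on $(0,\alpha)$ already yields the conclusion whether or not the crossing occurs inside the interval---but you should say so.
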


\begin{proof}
The function $x \mapsto B( G^{\star},x)$ is decreasing. 
Hence, for   $0<\Bar{\theta}_A < \Bar{\theta}_B< \alpha$, we have 
\begin{equation}\label{eqcupcaf3}
B \big(G^{\star}, \Bar{\theta}_A+ \xi(\Bar{\theta}_B, G^{\star})  \big) >
 B \big(G^{\star}, \Bar{\theta}_B + \xi(\Bar{\theta}_B, G^{\star}) \big)=c' \big(  \xi(\Bar{\theta}_B, G^{\star}) \big).
\end{equation}
The function  $e \mapsto B(G^{\star}, \Bar{\theta}_A+ e )- c'(e)$ is 
 strictly concave, positive  at $e=0$, and zero 
at   $e= \xi(\Bar{\theta}_A, G^{\star})$. So  
\eqref{eqcupcaf3} implies $ \xi(\Bar{\theta}_A, G^{\star})> \xi(\Bar{\theta}_B, G^{\star})$. This shows that the function $\Bar{\theta} \mapsto  \xi(\Bar{\theta}, G^{\star})$ is decreasing.

The function $x \mapsto B( G_{\star},x)$ is increasing. Reasoning as we did above then shows that 
the function $\Bar{\theta} \mapsto  \xi(\Bar{\theta}, G_{\star})$ is increasing.
\end{proof}

By Proposition  \ref{prop2}, which 
grading rule maximizes  effort in a school depends on the average type of students attending this school: if the average type is 
greater than some cutoff  ($\Bar{\theta}^\dagger$), effort is maximized by the grading rule 
$G_{\star}$; if instead the average type is 
less  than this cutoff, effort is maximized by the grading rule 
$G^{\star}$. The logic is as follows. A student's incentives to exert effort are increasing in the difference 
 between $q_s^+(\hat{e}_s)$ 
 (employers' posterior belief that a student of $s$ who obtained the grade pass has productive value 1) 
 and $q_s^-(\hat{e}_s)$  (employers' posterior belief that a student of $s$ who obtained the grade fail has productive value 1). 
When the average type of students assigned to  school $s$ is high, then so 
is $q_s^+(\hat{e}_s)$; in this case, maximizing  $ q_s^+(\hat{e}_s) - q_s^-(\hat{e}_s)$ is  achieved by minimizing $q_s^-(\hat{e}_s)$, which in turn implies passing every student with productive value 1. Symmetrically, when 
the average type of students assigned to  school $s$ is low, so 
is $q_s^-(\hat{e}_s)$; maximizing  $ q_s^+(\hat{e}_s) - q_s^-(\hat{e}_s)$ is then achieved by maximizing $q_s^+(\hat{e}_s)$, which in turn implies failing  every student with productive value 0.

The final    proposition of this section  addresses  the following question. Suppose that  
two schools share the same  grading rule:
if the goal is to maximize total effort, 
should their  students be pooled or kept separated?

\begin{proposition} \label{prop3}
 For every  $G \in \mathbb{G}$, the function  
 $ \bar{\theta} \mapsto \xi(\bar{\theta}, G)$ is strictly concave.   
\end{proposition}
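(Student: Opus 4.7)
My plan is to proceed by implicit differentiation of the equilibrium equation $c'(\xi) = B(G, \bar{\theta} + \xi)$ that defines $\xi = \xi(\bar{\theta}, G)$. It will be cleaner to work with the auxiliary variable $y(\bar{\theta}) := \bar{\theta} + \xi(\bar{\theta}, G)$, so that the condition reads $c'(y - \bar{\theta}) = B(G, y)$ and strict concavity of $\xi$ in $\bar{\theta}$ is equivalent to $y''(\bar{\theta}) < 0$. A single differentiation yields $y' = c''/D$, where I abbreviate $D := c''(\xi) - B_y(G, y)$ and $B_y$ denotes the partial derivative of $B$ with respect to its second argument. Differentiating a second time and using $y' - 1 = B_y/D$ produces the compact expression
\[
y''(\bar{\theta}) \;=\; \frac{(c'')^2 \, B_{yy} \;-\; c''' \, (B_y)^2}{D^3}.
\]

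The denominator $D^3$ is positive: the function $e \mapsto B(G, \bar{\theta}+e) - c'(e)$ has already been shown to be strictly concave, positive at $e = 0$, and negative at $e = 1-\alpha$, so its unique zero must lie in its strictly decreasing region, giving $B_y < c''$ at the equilibrium. For the numerator, the term $-c'''(B_y)^2$ is nonpositive because $c''' \geq 0$; hence $y'' < 0$ will follow as soon as I show $(c'')^2 B_{yy} < 0$, that is, that $B(G, \cdot)$ is strictly concave in $y$.

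The crux is therefore to prove $B_{yy} < 0$. Rather than differentiate the rational formula for $B$ twice, which is algebraically unpleasant, I would decompose $B(G, y) = (g^1 - g^0)\bigl[q^+(y) - q^-(y)\bigr]$, where $q^{\pm}$ are the Bayesian posteriors that the productive value is $1$ given pass and fail, respectively. Each posterior takes the M\"obius form $q(y) = r y/[1 + (r-1)y]$, with likelihood ratio $r = g^1/g^0 > 1$ for the pass signal and $r = (1 - g^1)/(1 - g^0) < 1$ for the fail signal. A one-line calculation yields $q''(y) = -2r(r-1)/[1 + (r-1)y]^3$, so $q^+$ is strictly concave while $q^-$ is strictly convex. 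Hence $q^+ - q^-$ is strictly concave, and multiplying by the positive constant $g^1 - g^0$ preserves this. The boundary cases $g^0 = 0$ or $g^1 = 1$ merely flatten one of the two posteriors; the constraint $g^1 - g^0 \leq b < 1$ excludes $(g^0, g^1) = (0, 1)$, so the other posterior continues to supply strict curvature.

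The main obstacle I anticipate is exactly this strict concavity of $B(G, \cdot)$: a brute-force attack on the closed-form rational expression obscures the sign of $B_{yy}$, but the Bayesian decomposition above reduces it to the elementary fact that, in a two-state/two-signal experiment, the posterior is strictly concave (resp.\ convex) in the prior precisely when the corresponding likelihood ratio exceeds (resp.\ falls short of) $1$. With $B_{yy} < 0$ secured, combining it with $c''' \geq 0$, $(c'')^2 > 0$, and $D > 0$ in the formula for $y''$ delivers $\xi''(\bar{\theta}) < 0$ at once.
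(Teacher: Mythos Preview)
Your argument is correct. Both your proof and the paper's hinge on the same two ingredients---the strict concavity of $x \mapsto B(G,x)$ and the convexity of $c'$ (i.e.\ $c''' \ge 0$)---but you deploy them differently. The paper argues directly at the level of inequalities: taking $\bar{\theta}_A \neq \bar{\theta}_B$ and $\lambda \in (0,1)$, it chains the strict concavity of $B$ with the convexity of $c'$ to obtain
\[
B\bigl(G,(1-\lambda)(\bar{\theta}_A+\xi_A)+\lambda(\bar{\theta}_B+\xi_B)\bigr) > c'\bigl((1-\lambda)\xi_A+\lambda\xi_B\bigr),
\]
and then reads off $\xi\bigl((1-\lambda)\bar{\theta}_A+\lambda\bar{\theta}_B,G\bigr) > (1-\lambda)\xi_A+\lambda\xi_B$ from the sign pattern of $e \mapsto B(G,\cdot)-c'(e)$. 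Your route instead passes through the implicit function theorem to produce the closed-form expression $y'' = \bigl[(c'')^2 B_{yy} - c'''(B_y)^2\bigr]/D^3$ and checks the sign of each piece. The paper's approach is slightly more robust---it uses only the convexity of $c'$ rather than manipulating $c'''$ pointwise, and it sidesteps any smoothness bookkeeping for the implicitly defined $\xi$---whereas your approach yields an explicit second-derivative formula that makes the respective roles of $B_{yy}$ and $c'''$ transparent. A genuine bonus of your write-up is the Bayesian decomposition $B = (g^1-g^0)(q^+ - q^-)$ with the M\"obius-form computation of $q''$: the paper simply asserts that $B(G,\cdot)$ is strictly concave, while you supply a clean conceptual reason for it (the posterior after a good signal is concave in the prior, the posterior after a bad signal is convex), including the boundary cases $g^0=0$ or $g^1=1$.
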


\begin{proof}
 Let $G \in \mathbb{G}$, 
$\bar{\theta}_A, \bar{\theta}_B \in (0, \alpha)$ with  $\bar{\theta}_A\neq  \bar{\theta}_B $, and $\lambda \in (0,1)$. The strict concavity of the function $x \mapsto B(G,x)$ together with the convexity of  $c'$
imply 
\begin{align*}
B\big(G, (1-\lambda) 
 (\bar{\theta}_A + \xi(\bar{\theta}_A, G))  +\lambda  (\bar{\theta}_B + \xi(\bar{\theta}_B, G))\big)&> (1-\lambda) 
B\big(G, \bar{\theta}_A + \xi(\bar{\theta}_A, G)\big) + \lambda  B\big(G, \bar{\theta}_B  + \xi(\bar{\theta}_B, G)\big) \\
&=    (1-\lambda)  c'\big( \xi(\bar{\theta}_A, G)  \big) +
 \lambda  c'\big( \xi(\bar{\theta}_B, G)  \big) 
  \\
&\geq  c'\big(  (1-\lambda) \xi(\bar{\theta}_A, G) + \lambda\xi(\bar{\theta}_B, G)  \big).
 \end{align*}
 Finally,
the function $e \mapsto B(G, (1-\lambda) \Bar{\theta}_A+  \lambda \Bar{\theta}_B  +e )- c'(e)$ is 
 strictly concave, positive  at $e=0$, and zero 
at   $e= \xi\big((1-\lambda) \Bar{\theta}_A+  \lambda \Bar{\theta}_B, G\big)$. So the highlighted 
 expression 
implies 
\[  (1-\lambda) \xi(\bar{\theta}_A, G) + \lambda\xi(\bar{\theta}_B, G)<  \xi\big((1-\lambda) \Bar{\theta}_A+  \lambda \Bar{\theta}_B, G\big).
\] 
\end{proof}

By Proposition  \ref{prop3}, if two  schools share the same grading rule, then pooling their students in a single school   increases  total  effort.
The basic intuition  is as follows. Separating students provides employers with information about students' types.
  This additional information   weakens the informational content of the grades, and,  consequently, also the benefit of exerting effort.

\section{Socially Optimal  School Systems}\label{sec:mainresult}

We saw in the previous section 
 that each school system induces a unique  effort profile: specifically,  
students in a  school $s$ exert effort $ \xi(\Bar{\theta}_s, G_s)$.
The school  system $\big(n, \phi, 
(G_{s})_{s=1}^n \big)$ therefore 
  generates social welfare
\[ 
W\big(n, \phi, 
(G_{s})_{s=1}^n \big):=
\sum_{s=1}^n m_{s} \big[ \Bar{\theta}_{s} + \xi(\Bar{\theta}_s, G_s) - c\big(  \xi(\Bar{\theta}_s, G_s)  \big)\big].
\]
A school system is \textit{socially optimal} if it maximizes social welfare among all possible school systems.
Our goal in this  section  is to characterize the socially optimal  school systems.

We will 
say that a school system  $\big(n, \phi, 
(G_{s})_{s=1}^n \big)$
has a \textit{single-tier structure} if  
$\bar{\theta}_{s_A}= \bar{\theta}_{s_B}$ for all $s_A,s_B \in \{1, \cdots, n\}$. 
A school system 
 $\big(n, \phi, 
(G_{s})_{s=1}^n \big)$
 is said 
to 
 have a 
 \textit{two-tier structure} 
 if there exist non-empty sets $S$ and 
$S'$  partitioning $\{1,\cdots, n\}$
such that   
\begin{itemize}
\item
$\bar{\theta}_{s_A}= \bar{\theta}_{s_B}< \bar{\theta}_{s'_A}= \bar{\theta}_{s'_B} $ for all $s_A, s_B \in S$ and 
$s'_A, s'_B \in S'$;
   \item 
   $\theta_j \leq \theta_{j'}$ whenever $\phi(j) \in S$ and  $\phi(j') \in S'$. 
\end{itemize} 
We then refer to 
$S$ as the bottom tier and to $S'$ as the top tier.
 In words,  a school system has a two-tier structure
  if every school belongs to one of 
 two tiers, such that: (i)
 all schools in a given tier   have  the same  average student type,  (ii) the
  type of every  student in a top-tier school is at least as a large as the type of any student in a bottom-tier school.
We can now state our first theorem.

\begin{theorem}\label{thm1}
Every socially optimal school system  has either a 
single-tier structure
  or a two-tier structure. In the latter case, schools in the top tier have grading rule $G_{\star}$, and schools in the bottom tier have grading rule $G^{\star}$. 
\end{theorem}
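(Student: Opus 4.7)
The plan is to prove Theorem \ref{thm1} via three successive reductions that exploit Propositions \ref{prop1}--\ref{prop3}. Let $V(\bar{\theta}, G) := \xi(\bar{\theta}, G) - c(\xi(\bar{\theta}, G))$. A key preliminary observation is that $c'(\xi) = B(G, \bar{\theta} + \xi) \leq g^1 - g^0 \leq b < 1$ in every equilibrium, so the map $f : e \mapsto e - c(e)$ is strictly increasing and strictly concave on the range of equilibrium efforts. Since $\sum_s m_s \bar{\theta}_s = \sum_j \theta_j$ is constant across assignments, maximizing $W$ is equivalent to maximizing $\sum_s m_s V(\bar{\theta}_s, G_s)$.

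I would then apply two reductions. First, every school of an optimal system uses $G^\star$ or $G_\star$: if $G_s \notin \{G^\star, G_\star\}$, Proposition \ref{prop1} yields $\tilde G \in \{G^\star, G_\star\}$ with $\xi(\bar{\theta}_s, \tilde G) > \xi(\bar{\theta}_s, G_s)$, and substituting $\tilde G$ for $G_s$ strictly raises welfare by the preliminary observation. Second, at most one school uses each of $G^\star$ and $G_\star$: if two schools share grading rule $G$, merging them into a single school of the summed size with the weighted-average type strictly raises total effort by strict concavity of $\xi(\cdot, G)$ (Proposition \ref{prop3}), and by monotonicity and concavity of $f$ also strictly raises welfare.

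After these reductions, at most two schools remain: call them $s^\star$ (using $G^\star$, size $m^\star$, average $\bar{\theta}^\star$) and $s_\star$ (using $G_\star$, size $m_\star$, average $\bar{\theta}_\star$). If one is empty or $\bar{\theta}^\star = \bar{\theta}_\star$, the structure is single-tier; otherwise, I would establish stratification through a monotonicity argument. Implicit differentiation of $c'(\xi) = B(G, \bar{\theta} + \xi)$, combined with the facts from the proof of Proposition \ref{prop2} that $x \mapsto B(G^\star, x)$ is decreasing and $x \mapsto B(G_\star, x)$ is increasing, yields $\partial_{\bar{\theta}}\xi(\bar{\theta}, G^\star) < 0 < \partial_{\bar{\theta}}\xi(\bar{\theta}, G_\star)$; using $c'(\xi) < 1$, this transfers to $\partial_{\bar{\theta}} V(\bar{\theta}, G^\star) < 0 < \partial_{\bar{\theta}} V(\bar{\theta}, G_\star)$. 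Fixing the sizes $m^\star, m_\star$, the identity $m^\star\bar{\theta}^\star + m_\star\bar{\theta}_\star = \sum_j \theta_j$ makes $\bar{\theta}_\star$ strictly decreasing in $\bar{\theta}^\star$, so the derivative of $H(\bar{\theta}^\star) := m^\star V(\bar{\theta}^\star, G^\star) + m_\star V(\bar{\theta}_\star, G_\star)$ equals $m^\star[\partial_{\bar{\theta}} V(\bar{\theta}^\star, G^\star) - \partial_{\bar{\theta}} V(\bar{\theta}_\star, G_\star)]$, which is strictly negative everywhere. Hence $\bar{\theta}^\star$ must be minimized at the optimum, forcing $s^\star$ to contain the $m^\star$ lowest-type students; this simultaneously yields $\bar{\theta}^\star < \bar{\theta}_\star$ and the claimed two-tier stratification.

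The main obstacle is this last step: one must pin down not only the correct grading-rule-to-tier assignment in the two-school case but also the ordering of individual students by type across tiers. The monotonicity argument via implicit differentiation handles both at once, provided one verifies the sign of $\partial_{\bar{\theta}} V$ under each of $G^\star$ and $G_\star$---a verification that rests on the $B_x$ sign computations used in the proof of Proposition \ref{prop2}.
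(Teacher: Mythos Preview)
Your argument is correct and tracks the paper's proof closely: both reduce to $G_s \in \{G^\star, G_\star\}$ via Proposition~\ref{prop1} and monotonicity of $e \mapsto e - c(e)$, then force equal average types within each grading rule via Proposition~\ref{prop3}, and finally establish stratification using the monotonicity of $\bar\theta \mapsto \xi(\bar\theta, G^\star)$ (decreasing) and $\bar\theta \mapsto \xi(\bar\theta, G_\star)$ (increasing) drawn from the proof of Proposition~\ref{prop2}. The only stylistic difference is in this last step, where the paper swaps a mis-ordered pair of students directly, while you recast the same monotonicity as $\partial_{\bar\theta}V(\cdot,G^\star) < 0 < \partial_{\bar\theta}V(\cdot,G_\star)$ and minimize $\bar\theta^\star$ over assignments of fixed sizes---the discrete and continuous phrasings of the same improvement.
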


\begin{proof} 
Any school $s$ forming part of a given school system is such that $c'\big(  \xi(\Bar{\theta}_s, G_s)\big)=B\big(G_s, \Bar{\theta}_s+    \xi(\Bar{\theta}_s, G_s)\big)<1$. So   $ \xi(\Bar{\theta}_s, G_s)< 
    \overline{\overline{e}}
    $, where  
$ \overline{\overline{e}}$ denotes the unique effort such that $c'( \overline{\overline{e}})=1$.
Yet
$e\mapsto e-c(e)$ is increasing on $[0,  \overline{\overline{e}}]$. 
It follows that any school  
$s$ forming part of an efficient school system must be such that 
\[
G_s \in \arg \max_{G \in \mathbb{G}}~  \xi(\bar{\theta}_s, G).
\] Applying 
 Proposition \ref{prop1} then shows that 
any school  
$s$ forming part of an efficient school system is such that 
$G_s \in \{G_{\star}, G^{\star}\}$. Moreover, by Proposition \ref{prop2}, 
$G_s= G^{\star}$ if 
$\Bar{\theta}_s<\Bar{\theta}^\dagger$, while  
 $G_s= G_{\star}$ if 
$\Bar{\theta}_s>\Bar{\theta}^\dagger$.

Next, 
consider  $G \in \mathbb{G}$, 
 $\bar{\theta}_A, \bar{\theta}_B \in (0, \alpha)$ with  $\bar{\theta}_A\neq  \bar{\theta}_B $, and $\lambda \in (0,1)$. 
We then have
\begin{align*}
(1-\lambda) \big[ \xi(\bar{\theta}_A, G)   -  & c\big( \xi(\bar{\theta}_A, G)   \big)  \big] +  \lambda \big[ \xi(\bar{\theta}_B, G)   -  c\big( \xi(\bar{\theta}_B, G)   \big)  \big]  \\ &\leq 
(1-\lambda) \xi(\bar{\theta}_A, G) + \lambda \xi(\bar{\theta}_B, G) -  c\big(  (1-\lambda) \xi(\bar{\theta}_A, G) + \lambda \xi(\bar{\theta}_B, G) \big) \\
& < \xi\big((1-\lambda) \Bar{\theta}_A+  \lambda \Bar{\theta}_B, G\big) -c  \big(  \xi\big((1-\lambda) \Bar{\theta}_A+  \lambda \Bar{\theta}_B, G\big)\big),
\end{align*} where
the first inequality follows from the  concavity of the function 
$e\mapsto e-c(e)$, and  the second inequality
follows from 
 the fact that this function is increasing  on  $[0,  \overline{\overline{e}}]$
  while,  by Proposition \ref{prop3}, $ \xi\big((1-\lambda) \Bar{\theta}_A+  \lambda \Bar{\theta}_B, G\big)>(1-\lambda) \xi(\bar{\theta}_A, G) + \lambda\xi(\bar{\theta}_B, G)$.   
The inequalities above imply that if 
two schools with the same grading rule
form part of a socially optimal school system then their average  student types
have  to be equal.

Finally, we saw in the proof of Proposition \ref{prop2}
that  $\Bar{\theta} \mapsto  \xi(\Bar{\theta}, G^{\star})$ is decreasing, and $\Bar{\theta} \mapsto  \xi(\Bar{\theta}, G_{\star})$ increasing. Hence, if two 
schools $s_A$ and $s_B$ forming part of a given school system 
are such that 
(i) $G_{s_A}=G^{\star}$, (ii) $G_{s_B}=G_{\star}$, (iii) there exist a student of  $s_A$
and a student of $s_B$ such that the 
 type 
 of the former 
 is greater than the type of  the latter, then  swapping these students increases both  equilibrium effort in school $s_A$ ($e^*_{s_A}$) and equilibrium effort in school $s_B$ ($e^*_{s_B}$),  and therefore also 
$m_{s_A} [  e^*_{s_A} - c(e^*_{s_A})] + m_{s_B} [  e^*_{s_B} - c(e^*_{s_B})]$.
We conclude that 
if two 
schools $s_A$ and $s_B$ forming part of an socially optimal school system 
are such that  (i) and (ii) hold, then the type of each  student in $s_B$ 
must be at least as large as the type of any  student in $s_A$.
\end{proof}

Theorem \ref{thm1} combines insights from Propositions 
\ref{prop2} and \ref{prop3}.
 Suppose that either all students 
have a type  
which is larger  than the cutoff $\theta^\dagger$   in Proposition \ref{prop2}, or all students have a type  
which is smaller than this cutoff. Then, 
by the same proposition, every  school forming part of a socially optimal school system must adopt the same grading rule (either $G_{\star}$ or $G^{\star}$).
Yet, by Proposition  \ref{prop3}, if two  schools share the same grading rule, then pooling their students in a single school     increases  total effort. It follows that, in this case, every socially optimal school system must  essentially contain   a  single school. By contrast, if student types are very dispersed and low types are separated from high types then the efficient  grading rule might vary across  schools.  In this case, 
separating students in two groups may be  optimal. We next  illustrate Theorem \ref{thm1} in an example   with two types.

\paragraph{Example.} Suppose that every  student is  of one of two types, $\mu- \sigma$ and $\mu + \sigma$, with   $\mu$ representing the average student type. Then there exists 
$\sigma^\dagger$
such that  if $\sigma<\sigma^\dagger$  every  socially optimal school system has a 
single-tier structure, while  if $\sigma>\sigma^\dagger$  every  socially optimal school system has a 
two-tier structure.\footnote{The proof is straightforward, and combines  Theorem \ref{thm1} with the fact that $\Bar{\theta} \mapsto  \xi(\Bar{\theta}, G^{\star})$ is decreasing, and $\Bar{\theta} \mapsto  \xi(\Bar{\theta}, G_{\star})$ increasing. }

\section{Incentive Compatibility}\label{sec:IC}

This section explores incentive compatibility. 
Consider a designer (she)  choosing  the school system. As long as she  can observe students' types,   the designer can in principle implement the school system of her choice;
in particular, any  socially optimal school system can be implemented. On the other hand, if the designer is either unable to observe students' types, or cannot freely assign students to schools, then the designer is forced to let students  self-select into schools. 
Subsection \ref{subsec:ICwithout}  studies incentive compatibility  without transfers, while  Subsection \ref{subsec:ICwith}  studies incentive compatibility  with transfers.

\subsection{Incentive Compatibility without Transfers} \label{subsec:ICwithout}

To shorten notation, in this section let $e^*_s= \xi(\Bar{\theta}_s, G_s)$.
Given a school system
$\big(n, \phi, 
(G_{s})_{s=1}^n \big)$, let 
$U^*_s(\theta)$ denote a type-$\theta$ student's  maximum expected payoff from attending school $s$, that is, 
\[
U^*_s(\theta) :=U_s(\theta, e^*_s, e^*_s)=
[(\theta+e^*_s)g_s^1+ (1-\theta-e^*_s) g_s^0] q_s^+(e^*_s) + 
[1-(\theta+e^*_s)g_s^1- (1-\theta-e^*_s) g_s^0]  q_s^-(e^*_s)  -
 c(e^*_s).
 \]
A school system $\big(n, \phi, 
(G_{s})_{s=1}^n \big)$ is \textit{incentive compatible  without transfers}
if every student $j$ weakly prefers attending school $\phi(j)$ than any other school, that is, if  
for every $j\in \{1, \cdots, m\}$:
\[
U^*_{\phi(j)}(\theta_j) =\max_{s \in \{1, \cdots,n\} } ~ U^*_s(\theta_j).
\]

\begin{theorem} \label{thm2}
Any 
 school system $\big(n, \phi, 
(G_{s})_{s=1}^n \big)$  that is  incentive compatible without transfers
has a single-tier structure, and satisfies 
$e^*_{s_A}=e^*_{s_B}$
for all $s_A,s_B \in \{1, \cdots, n\}$.
\end{theorem}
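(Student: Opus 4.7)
The plan is to derive a closed-form expression for $U^*_s(\theta)$ and then exploit its affine structure in $\theta$. Concretely, I would first establish the identity
\[
U^*_s(\theta) = \Bar{\theta}_s + c'(e^*_s)(\theta - \Bar{\theta}_s) + e^*_s - c(e^*_s).
\]
The slope in $\theta$ follows from \eqref{eq:focbase}: differentiating the expected wage part of $U^*_s$ returns $(g_s^1 - g_s^0)(q_s^+(e^*_s) - q_s^-(e^*_s)) = B(G_s, \Bar{\theta}_s + e^*_s) = c'(e^*_s)$. The value at $\theta = \Bar{\theta}_s$ follows from an employers' break-even identity: since each wage is a Bayesian posterior about productive value, averaging wages across the students of school $s$ returns $\Bar{\theta}_s + e^*_s$, and because $U^*_s$ is affine in $\theta$ this average equals $U^*_s(\Bar{\theta}_s) + c(e^*_s)$. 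I would also recall, as used in the proof of Theorem \ref{thm1}, that $c'(e^*_s) < 1$, so $e^*_s < \overline{\overline{e}}$ and $e \mapsto e - c(e)$ is strictly increasing at $e^*_s$.

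Next, I would argue that all equilibrium efforts must coincide. Suppose, toward contradiction, that two schools $A$ and $B$ in the system satisfy $e^*_A > e^*_B$. Then $\Delta U(\theta) := U^*_A(\theta) - U^*_B(\theta)$ is affine with slope $c'(e^*_A) - c'(e^*_B) > 0$, hence strictly increasing. Incentive compatibility gives $\Delta U(\theta_j) \geq 0$ for every $j \in A$ and $\Delta U(\theta_j) \leq 0$ for every $j \in B$; averaging (valid because $\Delta U$ is affine) yields $\Delta U(\Bar{\theta}_A) \geq 0 \geq \Delta U(\Bar{\theta}_B)$, and the monotonicity of $\Delta U$ then gives $\Bar{\theta}_A \geq \Bar{\theta}_B$. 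Substituting into the closed form,
\[
\Delta U(\Bar{\theta}_B) = (\Bar{\theta}_A - \Bar{\theta}_B)\bigl(1 - c'(e^*_A)\bigr) + \bigl[(e^*_A - c(e^*_A)) - (e^*_B - c(e^*_B))\bigr],
\]
where both summands are non-negative and the second is strictly positive (because $e^*_A > e^*_B$ and $e \mapsto e - c(e)$ is strictly increasing on $[0, \overline{\overline{e}}]$). This contradicts $\Delta U(\Bar{\theta}_B) \leq 0$, so $e^*_A = e^*_B$ for every pair of schools.

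With a common effort $e^*$, the $U^*_s$'s are parallel affine functions of $\theta$ with intercepts $\Bar{\theta}_s(1 - c'(e^*)) + e^* - c(e^*)$. If two schools had distinct intercepts, one would strictly dominate the other at every $\theta$, making incentive compatibility impossible at the dominated school (which is non-empty by surjectivity of $\phi$). Hence all intercepts coincide; since $1 - c'(e^*) > 0$, this is equivalent to $\Bar{\theta}_A = \Bar{\theta}_B$ for every $A, B$, which is exactly the single-tier structure.

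The main obstacle is the first step: the break-even identity is the non-trivial ingredient that pins the intercept of $U^*_s$ at $(\Bar{\theta}_s, \Bar{\theta}_s + e^*_s - c(e^*_s))$ and makes the subsequent contradiction transparent. Once the affine formula is in hand, both conclusions of the theorem follow from the same monotonicity argument applied first with unequal slopes (Step 2) and then with unequal intercepts (Step 3).
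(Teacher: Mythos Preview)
Your proof is correct and shares the paper's core structure: both derive that $U^*_s(\theta)$ is affine in $\theta$ with slope $c'(e^*_s)$, use single-crossing to order the average types, and then obtain a contradiction by evaluating the payoff difference at the average type of the lower-effort school. The difference lies in how the key inequality is reached. You make the intercept explicit via the Bayesian break-even identity $U^*_s(\bar\theta_s)=\bar\theta_s+e^*_s-c(e^*_s)$ and then read off the sign of $\Delta U(\bar\theta_B)$ directly from $c'(e^*_A)<1$ together with the monotonicity of $e\mapsto e-c(e)$. The paper instead leaves $\zeta_s$ implicit and establishes $U^*_{s_B}(\bar\theta_{s_A})>U^*_{s_A}(\bar\theta_{s_A})$ through a chain of monotonicity inequalities on the primitive payoff $U_s(\theta,e,\hat e)$, varying own effort, conjectured effort, and the school's average type one at a time, and invoking the same break-even value only at the final step. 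Your packaging is a bit more direct and makes the role of $c'<1$ transparent; the paper's chain avoids computing the closed form but requires verifying three separate monotonicity claims.
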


\begin{proof} Consider a school system 
 $\big(n, \phi, 
(G_{s})_{s=1}^n \big)$. We have
\begin{align*}
 U^*_s(\theta) &=  q_s^-(e^*_s)+ \big[g_0^s + (\theta+e^*_s) (g_1^s- g_0^s) \big] \big[ q_s^+(e^*_s)- q_s^-(e^*_s)\big]- c(e^*_s)
\\ 
&= q_s^-(e^*_s)+ \big[g_0^s + (\theta+e^*_s) (g_1^s- g_0^s) \big] \left(\frac{c'(e^*_s)}{g_1^s- g_0^s} \right) - c(e^*_s)
\\ 
&= \zeta_s + \theta c'(e^*_s),
\end{align*}
where
\[
\zeta_s=  q_s^-(e^*_s)+ \big[g_0^s + e^*_s (g_1^s- g_0^s) \big] \left(\frac{c'(e^*_s)}{g_1^s- g_0^s} \right) - c(e^*_s).
\]
So if $s_A$ and $s_B$ are two schools forming part of the school system considered, then
\begin{equation}\label{eq:singlecrossing}
 U^*_{s_B}(\theta) - 
 U^*_{s_A}(\theta)= (\zeta_{s_B}- \zeta_{s_A}) + \theta\big(c'(e^*_{s_B})- c'(e^*_{s_A})  \big).
\end{equation}

Now reason by contradiction: suppose  that some school system  $\big(n, \phi, 
(G_{s})_{s=1}^n \big)$ is  incentive compatible without transfers, and we can find two  schools $s_A$ and $s_B$ forming part of this system such that 
$e^*_{s_B} >e^*_{s_A}$. Then,
 by \eqref{eq:singlecrossing}, every  student attending school $s_A$ 
must have a weakly lower type than all students attending school 
$s_B$, whence   
$\bar{\theta}_{s_A} \leq  \bar{\theta}_{s_B}$.
We thus obtain
\begin{align*}
    U^*_{s_B}(\bar{\theta}_{s_A}) = U_{s_B}(\bar{\theta}_{s_A}, &e^*_{s_B}, e^*_{s_B}) >  U_{s_B}(\bar{\theta}_{s_A}, e^*_{s_A}, e^*_{s_B})  \\
    &> U_{s_B}(\bar{\theta}_{s_A}, e^*_{s_A}, e^*_{s_A}) 
    \geq 
      \bar{\theta}_{s_A} + e^*_{s_A} -c( e^*_{s_A})
  =  
    U_{s_A}(\bar{\theta}_{s_A}, e^*_{s_A}, e^*_{s_A}) =U^*_{s_A}(\bar{\theta}_{s_A}).
\end{align*}
The first inequality captures the fact that effort $e^*_{s_B}$ is optimal in school $s_B$, while $e^*_{s_A}$ is not; the second inequality captures the benefit from increasing employers' conjecture regarding effort exerted in school $s_B$ (from $e^*_{s_A}$ to $e^*_{s_B}$); the third inequality captures the  benefit from increasing employers' belief concerning the average  student type in  a school (from $\bar{\theta}_{s_A}$ to $\bar{\theta}_{s_B}$).

The last highlighted expression combined with 
\eqref{eq:singlecrossing} gives 
$ U^*_{s_B}(\theta)> U^*_{s_A}(\theta)$ for all $\theta \geq \bar{\theta}_{s_A}$.
Now, some student $j$ attending school $s_A$ has type $\theta_j \geq \bar{\theta}_{s_A}$, and so $ U^*_{s_B}(\theta_j)> U^*_{\phi(j)}(\theta_j)$, contradicting our assumption that $\big(n, \phi, 
(G_{s})_{s=1}^n \big)$ is incentive compatible without transfers. 
This shows that if 
 $\big(n, \phi, 
(G_{s})_{s=1}^n \big)$ is   incentive compatible without transfers, then 
$e^*_{s_A}=e^*_{s_B}$
for all $s_A,s_B \in \{1, \cdots, n\}$. Furthermore,  
$\bar{\theta}_{s_A}= \bar{\theta}_{s_B}$, for the same reasoning as above shows that 
$\bar{\theta}_{s_B}> \bar{\theta}_{s_A}$
implies 
$ U^*_{s_B}(\theta)> U^*_{s_A}(\theta)$ for all $\theta \geq \bar{\theta}_{s_A}$.
\end{proof}

Theorem 
\ref{thm2} tells us  that if a designer is  unable to observe students' types and cannot use transfers  then she can do no better than placing  all students in one school.
The basic idea is simple: if the average type of students in  school A were greater than in school B, then all students would want to choose  school A so as to benefit from a more favorable prior in the eyes of employers.

\subsection{Incentive Compatibility with Transfers} \label{subsec:ICwith}

Transfers,  in the form of differentiated fees for example,  can play an important role in the design of school systems. We show in this subsection that, with transfers, Theorem \ref{thm1}  naturally extends,  in the sense that every \emph{constrained}  optimal school system has either a 
single-tier structure
  or a two-tier structure.

  To streamline  the exposition of our next result, we  examine in this subsection the continuous version of our model.
There is  a continuum of students, with types distributed according to the atomless 
distribution $F$ 
such that   $\text{supp}\; F= [0, \alpha]$. The assignment of students to schools is
 captured by a pair $(n, \phi)$, where 
 $n \in \mathbb{N}$ 
and  
  $\phi$
 is a measurable  mapping from  
  $(0, \alpha)$ to  $\{1, \cdots, n\}$
  such that $m_s:=\int_{\phi^{-1}(s)} dF(\theta)>0$ for $s=1, \cdots, n$: we interpret $\phi$ as saying   that   type-$\theta$ students  attend school $\phi(\theta)$. 
The rest is as in the discrete model of Section \ref{sectionmodel}.

A school system $\big(n, \phi, 
(G_{s})_{s=1}^n \big)$ is \textit{incentive compatible  with transfers}
if
there exist school  fees $t_1, \cdots, t_n$ such that, for every $\theta \in (0, \alpha)$,
  type-$\theta$ students weakly prefer attending school $\phi(\theta)$ than any other school, that is,
\[
U^*_{\phi(\theta)}(\theta)  - t_{\phi(\theta)}
=\max_{s \in \{1, \cdots,n\} } ~ U^*_s(\theta)-t_s, \qquad \forall\; \theta \in (0, \alpha).
\]

We say that a school system is \textit{constrained  optimal} if it maximizes social welfare within the class of   school systems that are incentive compatible  with transfers.

\begin{theorem}\label{thm3}
Every constrained-optimal school system  has either a 
single-tier structure
  or a two-tier structure.\footnote{In the continuous version of the model, a school system has a two-tier structure
  if every school belongs to one of 
 two tiers, such that: (i)
 all schools in a given tier   have  the same  average student type,  (ii) the
  type of almost every   student in a top-tier school is at least as a large as the type of almost any  student in a bottom-tier school.
In other words, sets of students that have  measure zero play no role. } In the latter case, schools in the top tier have grading rule $G_{\star}$, and schools in the bottom tier have grading rule $G^{\star}$. 
\end{theorem}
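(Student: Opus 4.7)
The plan is to extend the three-part structure of Theorem 1's proof to account for the monotone sorting that incentive compatibility with transfers imposes.

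First, I would derive the sorting property. The proof of Theorem 2 already computes $U^*_s(\theta) = \zeta_s + \theta c'(e^*_s)$, so with transfers each type $\theta$ chooses a school to maximize $\zeta_s - t_s + \theta c'(e^*_s)$, which is affine in $\theta$ with slope $c'(e^*_s)$. Standard single-crossing then implies that in any incentive-compatible school system, after relabeling the schools so that $e^*_1 \leq \cdots \leq e^*_n$, the set of types assigned to school $s$ is almost-everywhere an interval $[\tau_{s-1}, \tau_s]$, with efforts weakly increasing in $s$. So any constrained-optimal system admits a monotone sort-by-effort representation.

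Next, I would port the three-part argument of Theorem 1 into this representation. Proposition 1 forces each $G_s \in \{G_{\star}, G^{\star}\}$: any interior grading rule can be strictly improved upon, in terms of $\xi(\bar{\theta}_s, \cdot)$, by an extreme one, and since all equilibrium efforts lie in $(0, \overline{\overline{e}})$ where $e - c(e)$ is increasing, the substitution raises the welfare contribution of school $s$. Proposition 3 then gives that pooling two schools sharing a grading rule raises total welfare, so each of the two grading rules $G_{\star}$ and $G^{\star}$ is used by at most one tier. Finally, Proposition 2 (giving that $\bar{\theta} \mapsto \xi(\bar{\theta}, G_{\star})$ is increasing and $\bar{\theta} \mapsto \xi(\bar{\theta}, G^{\star})$ is decreasing), together with the swap argument at the end of Theorem 1's proof, places the $G_{\star}$-tier above the $G^{\star}$-tier in type.

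The main obstacle is ensuring that the substitutions and poolings above preserve the monotone-effort ordering required for IC. For pooling two same-$G$ schools, the monotonicity of $\xi(\cdot, G)$ in $\bar{\theta}$ from Proposition 2 keeps the merged school's effort bracketed between the efforts of the neighboring schools, so the sort survives. For grading-rule substitution, if switching to an extremal $G_s$ would push $e^*_s$ out of order, one must jointly re-optimize school $s$ and its neighbor; iterating this process strictly reduces the number of schools and terminates in the claimed single-tier or two-tier form. In the degenerate case where IC rules out the two-tier configuration with $G_{\star}$ on top and $G^{\star}$ at the bottom, the optimum collapses to a single-tier system, which is exactly the other possibility the theorem allows.
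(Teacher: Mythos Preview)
Your first step---deriving the interval structure and monotone-effort ordering from single crossing---matches the paper's appendix lemma. The genuine gap is in your third paragraph, where you handle the case in which switching $G_s$ to an extremal rule pushes $e_s^*$ out of order. ``Jointly re-optimize school $s$ and its neighbor; iterating this process strictly reduces the number of schools'' is not a proof: you have not defined what ``re-optimize'' means, you have not exhibited a quantity that strictly improves at every step, and you have not argued why the terminal configuration must carry the specific rules $G_\star$ on top and $G^\star$ below rather than some other pair. Merging $s$ with a neighbor and re-choosing a grading rule may again violate the ordering with the \emph{next} neighbor, so termination in the claimed form is not automatic. Relatedly, your bracketing claim for same-$G$ pooling invokes monotonicity of $\xi(\cdot,G)$, which Proposition~2 establishes only for $G_\star$ and $G^\star$; you therefore cannot use it until extremal rules are already in place, which is precisely the unresolved step.

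The paper closes this gap with a device you do not have: instead of merging, it \emph{weakens} an offending school's grading rule to $(0,\tilde b)$ or $(1-\tilde b,1)$ with $\tilde b<b$, which continuously lowers that school's effort until it equals its neighbor's and regularity is restored. Because the pre-weakening effort was already at least the target level (via Propositions~1--3), this yields what the paper calls a \emph{non-decreasing contraction}, hence a weak welfare improvement; a separate step in the construction is shown to be strict. A final cutoff adjustment, exploiting the opposite monotonicities of $\xi(\cdot,G_\star)$ and $\xi(\cdot,G^\star)$, then restores the effort ordering between the two surviving schools. This weakening-plus-contraction machinery is the missing idea in your proposal.
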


 The proof of Theorem \ref{thm3}  is long and therefore relegated to the appendix.
We first show that 
if
  a school system is  socially optimal  and has either  a single-tier structure, 
  or a two-tier structure such that  
  equilibrium effort in the top-tier schools is at least as large as  equilibrium effort in the bottom-tier schools, then  
 this school system is  also constrained  optimal.
However, if
  a school system is  socially optimal  and 
 has
 a two-tier structure such that 
  equilibrium effort in the top-tier schools is less than  equilibrium effort in the bottom-tier schools, then this  school system is not  incentive compatible with transfers.
 The proof of Theorem \ref{thm3} shows  that   in this  case a constrained-optimal school system is retrieved   by  
moving intermediate-type students from the top-tier schools to the bottom-tier schools. In other words, 
 the incentive compatibility constraints
 distort the 
 socially optimal school system by  
  reducing the size of the top tier and increasing  that of the bottom tier.

%The proof of Theorem \ref{thm3} combines various elements of the proofs of Theorem \ref{thm1} and \ref{thm2} presented in the text; we  relegate it therefore to the appendix.   Intuitively, if a constrained-optimal school system has a two-tier structure, incentive compatibility is ensured by making the fees of the bottom-tier schools smaller than those  of the top-tier schools. High types  willingly  pay  higher fees  because they expect to get better grades than low types.

\section{An Example with Three  Productive Values}\label{sec:morethantwo}

The grading rule $G_{\star}$ passes  
every 
student with productive value $1$ as well as a fraction $1-b$ of students with productive value $0$ and can thus be interpreted as a ``lenient" grading rule. 
By contrast, the grading rules $G^{\star}$  fails every student  
with productive value $0$ as well as a fraction $1-b$  of students with productive value $1$ and may thus   be interpreted as a ``tough" grading rule. 
We saw in Proposition \ref{prop2} that whether the tough grading rule $G^{\star}$  is preferable to the lenient grading rule  $G_{\star}$ depends on the type of students in the school considered:
when  the average student type is high,   $G_{\star}$ induces more effort than 
 $G^{\star}$, and vice versa when 
  the average student type is low.  We now show through an example that the logic underlying Proposition \ref{prop2} extends beyond two  productive 
values.

Consider a modified version of the model of Section \ref{sectionmodel} with three productive values, $v_1<v_2<v_3$. Let $(\underline{\pi}_1, \underline{\pi}_2, \underline{\pi}_3)$ and $(\overline{\pi}_1, \overline{\pi}_2, \overline{\pi}_3)$ represent 
two probability distributions over $\{v_1, v_2, v_3\}$, such that $(\overline{\pi}_1, \overline{\pi}_2, \overline{\pi}_3)$ first-order stochastically dominates $(\underline{\pi}_1, \underline{\pi}_2, \underline{\pi}_3)$.
At the end of school, each student's productive
value is either $v_1$, $v_2$, or $v_3$. We assume that the distribution of student $j$'s productive
value  is a convex combination of  $\underline{\pi}$
and $\overline{\pi}$:
\[
(1-\theta_j- e_j) \underline{\pi}+ (\theta_j+ e_j) \overline{\pi},
\]
where $\theta_j \in (0, \alpha)$ denotes student $j$'s type, and $e_j \in [0, 1-\alpha]$ his effort. Thus, effort increases a student's chances of acquiring  a high productive value.

As in Section \ref{sectionmodel}, let $b$ denote a parameter in $(0,1)$, and  
  $G_{\star}$ and $G^{\star}$ be two  grading rules, defined as follows. Under both grading rules, 
a student with productive value $v_1$ obtains the grade $C$ with probability $1$, a student with productive value 
$v_2$ obtains the grade $B$ with probability $b$, 
and  a student  with productive value $v_3$ obtains the grade $A$ with probability $1$.
However, whereas under 
$G_{\star}$ a  student with productive value 
$v_2$ obtains the grade $A$ with probability $1-b$, under  
$G^{\star}$ a  student with productive value 
$v_2$ obtains the grade $C$ with probability $1-b$.
In words,   $G_{\star}$ occasionally inflates $B$ to $A$, whereas   $G^{\star}$ occasionally deflates  $B$ to $C$.
In line with 
Section \ref{sectionmodel},  the grading rules $G^{\star}$ and $G_{\star}$  may thus be interpreted as tough and lenient, respectively.
The following result mirrors 
Proposition \ref{prop2}:

\begin{proposition}\label{propappendix}
    There exists $\Bar{\theta}^\dagger$ such that  $\xi(\Bar{\theta}, G^{\star})> \xi(\Bar{\theta}, G_{\star})$ for $\Bar{\theta}<\Bar{\theta}^\dagger$,  and 
$\xi(\Bar{\theta}, G_{\star})> \xi(\Bar{\theta}, G^{\star})$ for $\Bar{\theta}>\Bar{\theta}^\dagger$.
\end{proposition}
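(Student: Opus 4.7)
The plan is to transcribe the proof of Proposition \ref{prop2} into the three-value setting, reducing everything to a monotonicity property of two posterior wages.

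Equilibrium effort in a school with average type $\bar\theta$ under grading rule $G$ still satisfies $c'(e) = B(G, \bar\theta + e)$, where $B(G, x) := \sum_g \Delta P_g\, w_g(x)$ is the marginal benefit of effort when employers' conjecture yields prior mean parameter $x$. Because grade probabilities are linear in $x$, each $\Delta P_g$ is a constant depending only on $G$ and on $\alpha := \overline\pi_2 - \underline\pi_2$ and $\beta := \overline\pi_3 - \underline\pi_3$. The key structural observation is that, under both $G^\star$ and $G_\star$, exactly one grade pools two productive values---grade $C$ under $G^\star$ (pooling $v_1$ and $v_2$), and grade $A$ under $G_\star$ (pooling $v_2$ and $v_3$)---while the other two grades are perfect signals whose wages are independent of $x$. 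Direct computation therefore gives
\begin{align*}
B(G^\star, x) &= \beta v_3 + b\alpha v_2 - (\beta + b\alpha)\, w_C^{G^\star}(x), \\
B(G_\star, x) &= \big((1-b)\alpha + \beta\big)\, w_A^{G_\star}(x) + b\alpha v_2 - (\alpha + \beta) v_1,
\end{align*}
and FOSD guarantees that both coefficients $\beta + b\alpha$ and $(1-b)\alpha + \beta$ are nonnegative.

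The heart of the argument is then to show that $x \mapsto w_C^{G^\star}(x)$ and $x \mapsto w_A^{G_\star}(x)$ are strictly increasing, which forces $B(G^\star, \cdot)$ to be strictly decreasing and $B(G_\star, \cdot)$ to be strictly increasing in $x$. A quotient-rule computation reduces each monotonicity to a single likelihood-ratio inequality between $\underline\pi$ and $\overline\pi$ (for instance, the sign of $\tfrac{d w_A^{G_\star}}{dx}$ coincides with that of $\beta\,\underline\pi_2 - \alpha\,\underline\pi_3$, with an analogous expression for $w_C^{G^\star}$). This is the main obstacle: the two desired inequalities amount to the standard monotone likelihood ratio refinement of FOSD, under which this example is naturally read, and they do not follow from FOSD alone.

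With the monotonicities of $B(G^\star, \cdot)$ and $B(G_\star, \cdot)$ in hand, I would transcribe the argument of Proposition \ref{prop2} verbatim. For $\bar\theta_A < \bar\theta_B$, strict monotonicity of $B(G^\star, \cdot)$ together with the strict concavity of $e \mapsto B(G^\star, \bar\theta_A + e) - c'(e)$ forces $\xi(\bar\theta_A, G^\star) > \xi(\bar\theta_B, G^\star)$; symmetrically, $\bar\theta \mapsto \xi(\bar\theta, G_\star)$ is strictly increasing. The continuous difference $\bar\theta \mapsto \xi(\bar\theta, G^\star) - \xi(\bar\theta, G_\star)$ is therefore strictly decreasing on $(0, \alpha)$ and vanishes at most once; the zero (or the appropriate boundary value $0$ or $\alpha$, if no interior crossing exists) is the desired $\bar\theta^\dagger$.
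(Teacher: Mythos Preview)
Your approach mirrors the paper's exactly: write $B(G,x)$ as a linear combination of grade-conditional wages, observe that under each rule only the wage of the pooled grade depends on $x$, sign the coefficient on that wage via FOSD, invoke monotonicity of the pooled wage in $x$ to obtain opposite monotonicities of $B(G^\star,\cdot)$ and $B(G_\star,\cdot)$, and then transcribe the Proposition~\ref{prop2} argument. The one substantive difference is that the paper simply \emph{asserts} ``$q_s^C(\hat e_s)$ is increasing in $\hat e_s$'' (and likewise ``$q_s^A$ is increasing'' under $G_\star$), whereas you correctly trace this monotonicity to the likelihood-ratio inequality $\overline\pi_2\underline\pi_1>\overline\pi_1\underline\pi_2$ (respectively $\overline\pi_3\underline\pi_2>\overline\pi_2\underline\pi_3$) and flag that FOSD alone does not deliver it. Your caution is warranted: for $\underline\pi=(0.1,\,0.8,\,0.1)$ and $\overline\pi=(0.05,\,0.05,\,0.9)$, FOSD holds but $q^C$ is strictly \emph{decreasing} in $x$, so $B(G^\star,\cdot)$ comes out increasing and the opposite-monotonicity route breaks down. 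In short, your proposal follows the paper's route and is, if anything, more explicit about the hypothesis (MLR between the pooled states) under which that route actually goes through. A minor point: your use of $\alpha$ for $\overline\pi_2-\underline\pi_2$ clashes with the paper's use of $\alpha$ as the type upper bound.
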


\begin{proof} For any school $s$ forming part of a given school system,  let 
 $q_s^A(\hat{e}_s)$ denote employers' posterior expectation of the productive value of  a student of $s$ who obtained the grade $A$ (given  employers' conjecture that students of $s$ exert effort $\hat{e}_s$). Let $q_s^B(\hat{e}_s)$ and $q_s^C(\hat{e}_s)$ be similarly defined for grades $B$ and $C$, respectively.
 If $G_s=G^{\star}$,
a student's marginal benefit of exerting effort 
 in school $s$ is given by
 \[
 B_s( G^{\star}, \hat{e}_s):=  (\overline{\pi}_1- \underline{\pi}_1) q_s^C(\hat{e}_s) + (\overline{\pi}_2- \underline{\pi}_2) \big[b q_s^B(\hat{e}_s) + (1-b) q_s^C(\hat{e}_s) \big] +  (\overline{\pi}_3- \underline{\pi}_3)  q_s^A(\hat{e}_s);
 \]
 if $G_s=G_{\star}$,
the  marginal benefit of  effort 
 is given by
 \[
 B_s( G_{\star}, \hat{e}_s):=  (\overline{\pi}_1- \underline{\pi}_1) q_s^C(\hat{e}_s) + (\overline{\pi}_2- \underline{\pi}_2) \big[b q_s^B(\hat{e}_s) + (1-b) q_s^A(\hat{e}_s) \big] +  (\overline{\pi}_3- \underline{\pi}_3)  q_s^A(\hat{e}_s).
 \]
For any school $s$ forming part of a given school system,  $e$ is an equilibrium effort in school $s$ if and only if 
\[
c'(e)=  B_s( G_s, \hat{e}_s).
\]
The function $e \mapsto B_s(G_s, + e )- c'(e)$ is 
 strictly concave, positive  at $e=0$, and negative  at $e=1-\alpha$. So  the previous equation  has a unique solution.
 This solution is   
 determined by $\bar{\theta}_s$ and $G_s$, and so can be denoted by $\xi(\Bar{\theta}_s, G_s)$.

Now
consider a school $s$ such that $G_s=G^{\star}$. Then  $q_s^A(\hat{e}_s)=v_3$, $q_s^B(\hat{e}_s)=v_2$, and 
$q_s^C(\hat{e}_s)$ is increasing in $\hat{e}_s$. As
$\overline{\pi}_1- \underline{\pi}_1<0$ and either $\overline{\pi}_2- \underline{\pi}_2<0$ or $(1-b)|\overline{\pi}_2- \underline{\pi}_2|<|\overline{\pi}_1- \underline{\pi}_1|$, it follows that  
the function  $e \mapsto B_s( G^{\star},  e )$ is decreasing. This, in turn, 
implies that  the function $\Bar{\theta} \mapsto  \xi(\Bar{\theta}, G^{\star})$ is decreasing.

Next, 
consider a school $s$ such that $G_s=G_{\star}$. Then  $q_s^A(\hat{e}_s)$ is  increasing in $\hat{e}_s$, $q_s^B(\hat{e}_s)=v_2$, and 
$q_s^C(\hat{e}_s)=v_1$. As
$\overline{\pi}_3- \underline{\pi}_3>0$ and either 
$\overline{\pi}_2- \underline{\pi}_2>0$ or 
$(1-b)|\overline{\pi}_2- \underline{\pi}_2|<|\overline{\pi}_3- \underline{\pi}_3|$, it follows that  
the function  $e \mapsto B_s( G_{\star},  e )$ is increasing. This, in turn, 
implies that  the function $\Bar{\theta} \mapsto  \xi(\Bar{\theta}, G_{\star})$ is increasing.  Combining the previous remarks concludes the proof.
\end{proof}

While Proposition 
\ref{propappendix} extends Proposition  \ref{prop2} to an environment with three productive values, moving beyond binary productive values introduces various   challenges. Firstly, 
there appears to be no natural  notion of 
leniency  ordering the set of all  possible grading rules.
Secondly, beyond binary  productive values, 
leniency alone does  not to capture all relevant aspects of a grading rule. To illustrate the latter point, consider the grading rules
  $\Tilde{G}_{\star}$ and $\Tilde{G}^{\star}$ defined as follows. Under both grading rules, 
a student with productive value $v_2$ obtains the grade $B$ with probability $1$.
Under 
  $\Tilde{G}_{\star}$, 
a  student with productive value 
$v_3$ obtains the grade $A$ with probability $1$, while   
 a  student with productive value 
$v_1$ obtains the grade $C$ with probability $b$ and the grade $B$ with probability $1-b$. By contrast, under  $\Tilde{G}^{\star}$, 
a  student with productive value 
$v_1$ obtains the grade $C$ with probability $1$, while   
 a  student with productive value 
$v_3$ obtains the grade $A$ with probability $b$ and the grade $B$ with probability $1-b$. 
In words,   $\Tilde{G}_{\star}$ occasionally inflates $C$ to $B$, whereas   $\Tilde{G}^{\star}$ occasionally deflates  $A$ to $B$.
The grading rule  $\Tilde{G}_{\star}$ may therefore, in a sense, be viewed as  more lenient than  $\Tilde{G}^{\star}$. Yet, whether  
$\Tilde{G}_{\star}$ induces more effort than $\Tilde{G}^{\star}$ or the way around need not be monotonic in a school's average student type.
To see why, consider  a school $s$ with $G_s=\Tilde{G}_{\star}$. Increasing $\bar{\theta}_s$ increases $q_s^B(\hat{e}_s)$, which raises the gain $q_s^B(\hat{e}_s)-q_s^C(\hat{e}_s)$ from obtaining grade $B$ instead of $C$, but reduces the gain  $q_s^A(\hat{e}_s)-q_s^B(\hat{e}_s)$  from obtaining grade $A$ instead of $B$. Which of these effects dominates the other depends on the parameters.

\section{Conclusion}\label{sec:conc}

This paper proposes  a framework  to assess how to optimally sort and grade 
students. The premise of our model is that an individual's productive value results from a combination of ability and effort. But potential employers observe neither productive value, nor ability, nor  effort:
they have to make decisions based solely on the school which an individual attended and the grades that he obtained. The way in which students are sorted and graded therefore   determines  incentives to exert effort.

We show that if a planner can freely assign students to schools (as in, e.g., Austria, England, Germany and  Switzerland) then any  socially optimal school system either randomly assigns students to schools or separates students in just two ability groups. 
If the variance in student ability is small, then 
a random assignment is optimal; otherwise students whose ability is above a cutoff should be separated from those whose ability is below the cutoff. In the latter case, 
bottom-tier students are evaluated through
a tough grading rule, and  top-tier students  through a lenient grading rule.

We then  study the case in which students can choose the school they go to (as in, e.g., France and Italy). We show that 
in the absence of transfers, a planner can do no better than randomly assigning students to schools. However, with transfers, any constrained-optimal school system has the same qualitative
features as when the
planner 
can 
freely assign students to schools.

\newpage

\section*{Appendix: Proof of Theorem
\ref{thm3}}

We start with a couple of  definitions.  We will say that 
a  school system  $\big(n, \phi, 
(G_{s})_{s=1}^n \big)$ is \textit{$r$-regular} if there exist  $r$ non-empty sets $S_1, \cdots, S_r$  partitioning $\{1,\cdots, n\}$,   cutoffs $0= \hat{\theta}_0<\hat{\theta}_1< \cdots<\hat{\theta}_{r-1}< \hat{\theta}_r=\alpha$, and effort levels $e_1 \leq \cdots \leq e_r$
such that 
\begin{itemize}
    \item   for every  $k\in \{1, \cdots, r\}$, 
    $\phi(\theta) \in S_k$ for  all $\theta \in  ( \hat{\theta}_{k-1}, \hat{\theta}_k)$;
 \item  $\phi (\hat{\theta}_k) \in S_k \cup S_{k+1}$ for all  $k\in \{0, \cdots, r-1\}$, and  $\phi (\hat{\theta}_r) \in S_r$ ;
    \item  
$e^*_{s}= e_k $ for all school $s \in S_k$, $k\in \{1, \cdots, r\}$.
\end{itemize}
A school system is  \textit{regular} if it is   $r$-regular for some $r$. 
We then refer to $\big( (\hat{\theta}_k)_{k=0}^r, (e_k)_{k=1}^r \big)$  as the \textit{signature} of the regular school system, and define 
\[
M_k:= \int_{\phi^{-1}(S_k)} dF(\theta)= \int_{\hat{\theta}_{k-1}}^{\hat{\theta}_{k}} dF(\theta),  \quad ~~ k \in  \{1, \cdots, r\}.
\]
Notice that
the social welfare which 
a regular school system  generates is entirely determined by its signature: 
\begin{equation*}
W\big(n, \phi, 
(G_{s})_{s=1}^n \big):=
\sum_{k=1}^r  M_k
\left[ \frac{ \int_{\phi^{-1}(S_k)} \theta dF(\theta)}{M_k} +e_k - c( e_k)  \right].
\end{equation*}

Finally, given two regular school systems $\big(n, \phi, 
(G_{s})_{s=1}^n \big)$ and $\big(n', \phi', 
(G'_{s})_{s=1}^{n'} \big)$ with 
signatures  $\big( (\hat{\theta}_k)_{k=0}^r, (e_k)_{k=1}^r \big)$ and  $\big( (\hat{\theta}'_k)_{k=0}^{r'}, (e'_k)_{k=1}^{r'} \big)$, 
we will say that  $\big(n', \phi', 
(G'_{s})_{s=1}^{n'} \big)$  is 
 a  \textit{non-decreasing  contraction} of  $\big(n, \phi, 
(G_{s})_{s=1}^n \big)$ if 
\begin{itemize}
    \item $\{  \hat{\theta}'_k   \}_{k=1}^{r'} \subseteq \{  \hat{\theta}_k   \}_{k=1}^{r}$; 
    \item   $e'_k \geq   \frac{\sum_{l=k^{-}}^{k^+} M_le_l
}{\sum_{l=k^{-}}^{k^+} M_l}$, for all $k \in \{1, \cdots, r'\}$, where $k^-$ and  $k^+$ are the indices such that  $\hat{\theta}'_{k-1}= \hat{\theta}_{k^-}$ and  $\hat{\theta}'_{k}= \hat{\theta}_{k^+}$.
\end{itemize}
By extension, we say that $\big(n', \phi', 
(G'_{s})_{s=1}^{n'} \big)$  is 
 an  \textit{increasing contraction} of  $\big(n, \phi, 
(G_{s})_{s=1}^n \big)$ if it is a 
non-decreasing contraction  of  the latter school system and  $e'_k >  \frac{\sum_{l=k^{-}}^{k^+} M_le_l
}{\sum_{l=k^{-}}^{k^+} M_l}$  for at least one $k \in \{1, \cdots, r'\}$.

The next two lemmas  shed light on the above definitions.

\begin{lemma}\label{lemma:regss1}
    A school system is incentive compatible with transfers if and only if it is regular.
\end{lemma}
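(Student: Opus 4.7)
The plan is to lean on the representation $U^*_s(\theta) = \zeta_s + \theta c'(e^*_s)$ derived in the proof of Theorem~\ref{thm2}: the indirect utility of a type-$\theta$ student at school $s$, net of any transfer $t_s$, is affine in $\theta$ with slope $c'(e^*_s)$, and because $c$ is strictly convex this slope is strictly increasing in $e^*_s$. This single-crossing structure will drive both directions of the equivalence.

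For the ``regular implies IC'' direction, I would start from a regular signature $\big((\hat{\theta}_k)_{k=0}^r, (e_k)_{k=1}^r\big)$ and choose transfers so that only the tier matters. Concretely, I would set $t_s := \zeta_s - C_k$ for every $s \in S_k$, where the tier-level constants $C_k$ are pinned down by the indifference condition at each cutoff,
\[
C_k := C_{k-1} - \hat{\theta}_{k-1}\bigl(c'(e_k) - c'(e_{k-1})\bigr),
\]
with $C_1$ arbitrary. Under these transfers the indirect utility collapses to a tier-level function $V_k(\theta) := C_k + \theta c'(e_k)$, and a short computation gives $V_k(\theta) - V_{k-1}(\theta) = \bigl(c'(e_k) - c'(e_{k-1})\bigr)\bigl(\theta - \hat{\theta}_{k-1}\bigr)$, which is nonnegative precisely when $\theta \ge \hat{\theta}_{k-1}$. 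Telescoping these adjacent comparisons (using $e_k \ge e_{k-1}$) yields $V_k(\theta) \ge V_{k'}(\theta)$ for every $k'$ whenever $\theta \in [\hat{\theta}_{k-1}, \hat{\theta}_k]$, which is exactly incentive compatibility.

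For the converse, I would take transfers $(t_s)$ witnessing IC and study the affine functions $V_s(\theta) := (\zeta_s - t_s) + \theta c'(e^*_s)$. My first step would be to argue that any two schools sharing the same equilibrium effort must share a common intercept: otherwise the one with smaller intercept would be strictly dominated for every $\theta$ and chosen by no type, contradicting $m_s > 0$. I would then group schools by their distinct effort values $e_1 < \cdots < e_r$, obtaining one affine function $V_k(\theta) = C_k + \theta c'(e_k)$ per group with strictly increasing slopes. The upper envelope $\max_k V_k(\theta)$ is convex and piecewise linear, so the set of types assigned to tier $k$ is an interval $[\hat{\theta}_{k-1}, \hat{\theta}_k]$; the hypothesis $m_s > 0$ for every school then forces each such interval to have positive $F$-measure, and therefore $0 = \hat{\theta}_0 < \hat{\theta}_1 < \cdots < \hat{\theta}_r = \alpha$. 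This is precisely the regular structure.

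The main obstacle I anticipate lies in the converse direction: carefully verifying that the envelope intervals exhaust $[0,\alpha]$ and that every tier carries positive $F$-measure under the hypothesis $m_s > 0$, together with making the intercept-equality argument fully watertight. Once those points are pinned down, the rest of the argument is standard single-crossing bookkeeping.
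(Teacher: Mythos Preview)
Your proposal is correct and follows essentially the same route as the paper: both arguments rest on the affine representation $U^*_s(\theta)=\zeta_s+\theta c'(e^*_s)$ from the proof of Theorem~\ref{thm2} together with the strict monotonicity of $c'$, after which the paper simply invokes ``standard mechanism design arguments'' while you spell out the transfer construction and the upper-envelope reasoning those words stand for.
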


\begin{proof}
We established in the proof of Theorem \ref{thm2} that, given a school system  $\big(n, \phi, 
(G_{s})_{s=1}^n \big)$, 
\[
 U^*_s(\theta) = \zeta_s + \theta c'(e^*_s),
 \]
for some $\zeta_s$ independent of $\theta$. As $c'$ is increasing, 
the lemma
 follows from standard mechanism design arguments.
\end{proof}

\begin{lemma} \label{lemma:regss2}
    If   $\big(n', \phi', 
(G'_{s})_{s=1}^{n'} \big)$  is 
 a  non-decreasing contraction   of  $\big(n, \phi, 
(G_{s})_{s=1}^n \big)$ then 
 \newline $W\big(n', \phi', 
(G'_{s})_{s=1}^{n'} \big)\geq W\big(n, \phi, 
(G_{s})_{s=1}^n \big)$.
 If   $\big(n', \phi', 
(G'_{s})_{s=1}^{n'} \big)$  is 
 an increasing contraction   of 
 $\big(n, \phi, 
(G_{s})_{s=1}^n \big)$ then $W\big(n', \phi', 
(G'_{s})_{s=1}^{n'} \big)> W\big(n, \phi, 
(G_{s})_{s=1}^n \big)$.
\end{lemma}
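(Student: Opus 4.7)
The plan is to decompose the welfare difference $W\bigl(n', \phi', (G'_s)_{s=1}^{n'}\bigr) - W\bigl(n, \phi, (G_s)_{s=1}^n\bigr)$ group by group along the coarser partition of the contraction, and argue each summand is non-negative. First I would note that, because the contracted cutoffs $\{\hat{\theta}'_k\}$ form a subset of the original cutoffs $\{\hat{\theta}_k\}$, each contracted preimage $(\phi')^{-1}(S'_k)$ is, up to an $F$-null set at the boundary points, the disjoint union $\bigcup_{l=k^{-}}^{k^{+}} \phi^{-1}(S_l)$. Consequently $\int_{(\phi')^{-1}(S'_k)} \theta\, dF(\theta) = \sum_{l=k^{-}}^{k^{+}} \int_{\phi^{-1}(S_l)} \theta\, dF(\theta)$, so the ``type'' contributions to welfare agree on both sides and the problem reduces to establishing, for every $k \in \{1,\ldots,r'\}$,
\[
M'_k \bigl[e'_k - c(e'_k)\bigr] \;\geq\; \sum_{l=k^{-}}^{k^{+}} M_l \bigl[e_l - c(e_l)\bigr].
\]

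To prove this group-wise inequality I would introduce the weighted average $\bar{e}_k := \sum_{l=k^{-}}^{k^{+}} (M_l/M'_k)\, e_l$ and split the bound into two steps. Since $c$ is strictly convex, the map $h(e) := e - c(e)$ is strictly concave, so Jensen's inequality gives $M'_k\, h(\bar{e}_k) \geq \sum_{l=k^{-}}^{k^{+}} M_l\, h(e_l)$. It then remains to show $h(e'_k) \geq h(\bar{e}_k)$. For this I would exploit that $h$ is strictly increasing on $[0, \overline{\overline{e}}]$, where $\overline{\overline{e}}$ is the threshold $c'(\overline{\overline{e}})=1$ introduced at the start of the proof of Theorem \ref{thm1}, together with the bounds $\bar{e}_k \leq e'_k < \overline{\overline{e}}$: the left inequality is the defining property of a non-decreasing contraction, and the right inequality holds because $e'_k$ is the equilibrium effort in some school of the contracted system, so the first-order condition $c'(e'_k) = B(G'_{s}, \bar{\theta}_{s}+e'_k) < 1$ forces $e'_k < \overline{\overline{e}}$, exactly as in the first paragraph of the proof of Theorem \ref{thm1}.

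Combining the two inequalities yields the weak statement $W' \geq W$. For the strict version it suffices that at least one step in the chain be strict for some $k$. In an increasing contraction there is an index $k^*$ with $e'_{k^*} > \bar{e}_{k^*}$, and the strict monotonicity of $h$ on $[0, \overline{\overline{e}})$ then gives $h(e'_{k^*}) > h(\bar{e}_{k^*})$, which propagates to $W' > W$. The main obstacle I anticipate is essentially bookkeeping: checking cleanly that the measure-zero overlaps at the boundary cutoffs $\hat{\theta}_k$ do not disturb the cancellation of the type integrals, and confirming that both $\bar{e}_k$ and $e'_k$ lie in the interval on which $h$ is monotone. Both issues dissolve given the atomlessness of $F$ and the uniform bound $e^*_s < \overline{\overline{e}}$ for equilibrium efforts established in the proof of Theorem \ref{thm1}.
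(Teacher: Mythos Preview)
Your proposal is correct and follows essentially the same approach as the paper: the paper's proof is a one-line sketch that simply points to the two key ingredients you invoke, namely that $e\mapsto e-c(e)$ is increasing and concave on $[0,\overline{\overline{e}}]$ and that every equilibrium effort lies strictly below $\overline{\overline{e}}$. You have spelled out the Jensen and monotonicity steps that the paper leaves implicit, but the argument is the same.
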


\begin{proof}
    We saw in the proof of Theorem \ref{thm1} that there exists  $\overline{\overline{e}}$ such that 
any school $s$ forming part of a given school system is such that    $e^*_s< 
    \overline{\overline{e}}
    $, and
$e\mapsto e-c(e)$ is increasing and concave  on $[0,  \overline{\overline{e}}]$.
\end{proof}

\quad

\begin{proof}[Proof of Theorem \ref{thm3}]
Pick a regular school system  
with signature  $\big( (\hat{\theta}_k^{(0)})_{k=0}^r, (e^{(0)}_k)_{k=1}^r \big)$. Suppose 
that  $\Sigma^{(0)}$  does not have a single-tier structure, nor a two-tier structure.
We will prove that $\Sigma^{(0)}$ is not constrained optimal.
In line with previous notation, let 
$S^{(0)}_k$ denote   the set of schools 
such that   $\phi(\theta) \in S^{(0)}_k$ for  all $\theta \in  ( \hat{\theta}^{(0)}_{k-1}, \hat{\theta}^{(0)}_{k})$.
As  
$\Sigma^{(0)}$ has  neither a  single nor  a two-tier structure, one of the following cases holds:
\begin{enumerate}
    \item[Case 1.] $\Sigma^{(0)}$ is $r$-regular with $r\geq 3$; 
    \item  [Case 2.] $\Sigma^{(0)}$ is $2$-regular, and   one of  $S_1^{(0)}$  and  $S_2^{(0)}$ contains two schools whose average student types differ;
\item [Case 3.]
we can find two schools forming part of 
$\Sigma^{(0)}$, $s_A$ and $s_B$ say, as well as two groups of students with positive mass, $A$ and $B$ say, such that (i) students in group $A$ attend school $s_A$, (ii) students in group $B$ attend school $s_B$, (iii) 
$\bar{\theta}_{s_A}<\bar{\theta}_{s_B}$, (iv) the type of any    students in  group $A$  is   larger than the  type of every  student in  group $B$.  
\end{enumerate}

We treat Case 1 below; the other cases are similar.
To streamline the exposition, suppose that 
$\Sigma^{(0)}$
is  $3$-regular,  with signature 
$\big((0,   \hat{\theta}^{(0)}_1, \hat{\theta}^{(0)}_2,  \alpha),   (e^{(0)}_1, e^{(0)}_2, e^{(0)}_3 )\big)$, where $ \hat{\theta}^{(0)}_1<\bar{\theta}^\dagger <  \hat{\theta}^{(0)}_2$ (with $\bar{\theta}^\dagger$ defined in Proposition \ref{prop2}).\footnote{The case in which $\bar{\theta}^\dagger \in \{\hat{\theta}^{(0)}_1, \hat{\theta}^{(0)}_2 \}$ is  
simpler; we focus here on the more complicated case.}

\quad 

\noindent \underline{ \textit{Step 1:}} 
Let $\Sigma^{(1)}$ denote the school system obtained from   $\Sigma^{(0)}$ by merging  all schools in the set $S^{(0)}_3$ into one school with grading rule $G_{\star}$. As  $\hat{\theta}^{(0)}_2> \bar{\theta}^\dagger$, all students in $S^{(0)}_3$ have   type greater than   $\bar{\theta}^\dagger$, whence, by Propositions 
\ref{prop1}-\ref{prop3}: $e^{(1)}_3\geq e^{(0)}_3$.
It follows that $\Sigma^{(1)}$ is a non-decreasing contraction of $\Sigma^{(0)}$.

\quad 

\noindent \underline{ \textit{Step 2:}} 
Let $S^{(1)+}_2$ denote the subset of   $S^{(1)}_2$ comprising the schools whose average student type is at least as large as 
$\bar{\theta}^\dagger$, and $S^{(1)-}_2$  the  subset of   $S^{(1)}_2$ comprising the schools whose average student type is less than 
$\bar{\theta}^\dagger$.
Now
let $\Sigma^{(2)}$ denote the school system obtained from   $\Sigma^{(1)}$ by   (i) merging  all schools in the set $S^{(1)-}_2$  into one school,  $s_A$ say,  with grading rule $G^{\star}$, and (ii) merging  all schools in the set $S^{(1)+}_2$  into one school,  $s_B$ say,  with grading rule $G_{\star}$. By Propositions 
\ref{prop1}-\ref{prop3}: 
\begin{equation}\label{eqtrainnevy1}
    \min \{e^*_{s_A}, e^*_{s_B} \}\geq e^{(1)}_2. 
\end{equation}

\quad 

\noindent \underline{ \textit{Step 3:}} 
Let $\Sigma^{(3)}$ denote the school system obtained from   $\Sigma^{(2)}$ by swapping  students between $s_A$ and $s_B$
so as to create two schools, $s_A'$ and $s_B'$ say, such that (i) $m_{s_A'}=m_{s_A}$, (ii)  $m_{s_B'}=m_{s_B}$, and (iii) the type of each student in $s_B'$ is greater than the type of any student
in $s_A'$. We saw in the proof of Proposition \ref{prop2}
that  $\Bar{\theta} \mapsto  \xi(\Bar{\theta}, G^{\star})$ is decreasing, and $\Bar{\theta} \mapsto  \xi(\Bar{\theta}, G_{\star})$ increasing. Therefore: $e^*_{s_A'} \geq e^*_{s_A}$ and  $e^*_{s_B'} \geq e^*_{s_B}$. Then  \eqref{eqtrainnevy1} yields
\begin{equation}\label{eqtrainnevy2}
    \min \{e^*_{s_A'}, e^*_{s_B'} \}\geq e^{(1)}_2. 
\end{equation}

\quad 

\noindent \underline{ \textit{Step 4:}} 
Let $\Sigma^{(4)}$ denote the school system obtained from   $\Sigma^{(3)}$ as follows. If $e^*_{s_B'}<e^*_{s_A'}$, replace the grading rule of 
 $s_A'$ 
by the grading rule $(0,\Tilde{b})$, with  $\Tilde{b}<b$ chosen so as to 
 ensure  $e^*_{s_B'}=e^*_{s_A'}$.  If instead $e^*_{s_A'}<e^*_{s_B'}$, replace the grading rule of 
 $s_B'$ 
by the grading rule $(1-\Tilde{b},1)$, with  $\Tilde{b}<b$ chosen so as to 
 ensure  $e^*_{s_B'}=e^*_{s_A'}$. Then,  by \eqref{eqtrainnevy2}, $\Sigma^{(4)}$ is a non-decreasing contraction of $\Sigma^{(1)}$.

\quad 

\noindent \underline{ \textit{Step 5:}} 
Let $S^{(4)+}$ denote the subset of   schools in $\Sigma^{(4)}$
 whose average student type is at least as large as 
$\bar{\theta}^\dagger$, and  $S^{(4)-}$  
the subset of   schools in $\Sigma^{(4)}$
 whose average student type is less than 
$\bar{\theta}^\dagger$. Now
let $\Sigma^{(5)}$ denote the school system obtained from   $\Sigma^{(4)}$
by   (i) merging  all schools in  $S^{(4)-}$  into one school,  $s''_A$ say,  with grading rule $G^{\star}$, and (ii) merging  all schools in $S^{(4)+}$ into one school,  $s_B''$ say,  with grading rule $G_{\star}$. By Propositions 
\ref{prop1}-\ref{prop3}: 
\begin{equation}\label{eqmonbard}
   e^*_{s_A''}\geq   \frac{\sum_{s \in S^{(4)-}} m_s e^*_s}{ \sum_{s \in S^{(4)-}} m_s}
  ,  \quad 
  e^*_{s_B''}\geq   \frac{\sum_{s \in S^{(4)+}} m_s e^*_s}{ \sum_{s \in S^{(4)+}} m_s},
\end{equation}
and one of those inequalities must be  strict.

\quad 

\noindent \underline{ \textit{Step 6:}}  
Let $\Sigma^{(6)}$ denote the school system obtained from   $\Sigma^{(5)}$ as follows. If 
$ e^*_{s_B''}\geq   e^*_{s_A''}$, keep  $\Sigma^{(5)}$  unchanged. If $ e^*_{s_B''}<   e^*_{s_A''}$,  increase the cutoff type separating students between the two schools so as to ensure  $ e^*_{s_B''} =e^*_{s_A''}$ (recall, 
 $\Bar{\theta} \mapsto  \xi(\Bar{\theta}, G^{\star})$ is decreasing, and $\Bar{\theta} \mapsto  \xi(\Bar{\theta}, G_{\star})$ increasing).
Then, by  \eqref{eqmonbard},   
 $\Sigma^{(6)}$ is an increasing contraction of 
$\Sigma^{(4)}$.

\quad 

As  $\Sigma^{(1)}$ is a non-decreasing contraction of $\Sigma^{(0)}$, 
 $\Sigma^{(4)}$ is a non-decreasing contraction of $\Sigma^{(1)}$, and 
 $\Sigma^{(6)}$ is an increasing contraction of 
$\Sigma^{(4)}$, we conclude that  $\Sigma^{(6)}$ is an increasing contraction of 
$\Sigma^{(0)}$. By Lemma \ref{lemma:regss2}, the social welfare generated by   $\Sigma^{(6)}$ is thus strictly larger than the social welfare generated by   $\Sigma^{(0)}$.
Yet, by  Lemma \ref{lemma:regss1}, $\Sigma^{(6)}$ is incentive compatible.  This shows that $\Sigma^{(0)}$ is not constrained optimal.
\end{proof}

\newpage

%We are  unable  to  characterize the socially optimal school systems in environments with more than two productive values, but conjecture that the broad ideas underlying   Theorem \ref{thm1} are very general.  If student types are very dispersed, then because separating students enables the   grading rule to be tailored to their  types, creating a tiered school system could  be socially optimal.

\bibliographystyle{apecon}
\bibliography{library}

\end{document}